\DeclareMathOperator{\sign}{sgn}
\newtheorem{theorem}{Theorem}[section]
\newtheorem{lemma}[theorem]{Lemma}
\newtheorem{proposition}[theorem]{Proposition}
\theoremstyle{definition}
\newtheorem{definition}[theorem]{Definition}
\theoremstyle{remark}
\newtheorem*{remark}{Remark}
\def\paragraph#1{\noindent \textbf{#1}}
\numberwithin{equation}{section}
\newcommand{\D}{{\mathbb D}}
\newcommand{\C}{{\mathbb C}}
\newcommand{\E}{{\mathbb E}}
\renewcommand{\d}{\partial{}}
\newcommand{\db}{\bar{\partial}}
\newcommand{\zb}{\bar{z}}
\newcommand{\tb}{{\beta_1}}
\newcommand{\tg}{{\gamma_1}}
\newcommand{\tp}{{\psi_1}}
\newcommand{\F}{{_2F_1}}
\renewcommand{\Re}{\mathrm {Re\,}}
\newcommand{\R}{{\mathbb R}}
\newcommand{\T}{{\mathbb T}}
\def\br#1{\left(#1\right)}
\def\brb#1{\left[#1\right]}
\newcommand{\K}{K}
\newcommand{\I}{\mathrm{I}}
\newcommand{\II}{\mathrm{II}}
\newcommand{\III}{\mathrm{III}}
\newcommand{\IV}{\mathrm{IV}}
\def\qqq#1{{\bf ????~{#1}~????}}
\begin{document}

\title{Integral means spectrum of whole-plane SLE}
\date{}
 \author[D. Beliaev]{Dmitry Beliaev}
      \address{D. Beliaev\\Mathematical Institute\\
University of Oxford\\Andrew Wiles Building\\Radcliffe Observatory Quarter
\\Woodstock Road\\
Oxford
OX2 6GG\\ UK}
\email{belyaev@maths.ox.ac.uk}
   \author[B. Duplantier]{Bertrand Duplantier}
      \address{B. Duplantier\\Institut de Physique Th\'eorique\\ Universit\'e Paris-Saclay, CEA, CNRS\\ B\^at. 774, Orme des Merisiers\\ F-91191 Gif-sur- Yvette Cedex\\ France}
\email{bertrand.duplantier@cea.fr}
\author[M. Zinsmeister]{Michel Zinsmeister}
   \address{M. Zinsmeister\\MAPMO\\ Universit\'e d'Orl\'eans\\ B\^atiment de math\'ematiques\\ rue de Chartres B.P. 6759-F-45067 Orl\'eans Cedex 2, France}
     \email{zins@univ-orleans.fr}        

 \begin{abstract}
We complete the mathematical analysis of the fine structure of harmonic measure on SLE curves that was initiated in Ref. \cite{BeSmSLE}, as described by the averaged integral means spectrum. For the unbounded version of whole-plane SLE as studied in Refs. \cite{DNNZ,LoYe}, a phase transition has been shown to occur for high enough moments from the bulk  spectrum towards a novel spectrum related to the point at infinity. For the bounded version of whole-plane SLE of Ref.  \cite{BeSmSLE}, a similar transition phenomenon, now associated with the SLE origin, is proved to exist for low enough moments, but we show that it is superseded by the earlier occurrence of the transition to the SLE tip spectrum. 
 \end{abstract}
 \maketitle
\section{Introduction}

		Harmonic measure is one of the fundamental objects in geometric function theory and its fine structure provides much information about the underlying geometry. We refer the reader to the survey by Makarov \cite{Makarov} or to the recent monograph \cite{MR2450237}. In this article, we focus on the integral means spectrum of the harmonic measure (see definition below).

Schramm-Loewner Evolution (SLE)  curves are (in part conjecturally) the conformally invariant scaling limits of interfaces in critical lattice models of statistical physics. Since its introduction by the late Oded Schramm \cite{Schramm} fifteen years ago, the SLE process has sparked intense interest both in mathematical and physical communities. (See Ref. \cite{Lawler05} for a detailed study of SLE.)  One direction of research is the study of geometrical properties of these curves, one approach to which is precisely to compute the multifractal spectra for the harmonic measure. 

The first result in this direction is due to the second author who computed the SLE bulk spectrum by means of quantum gravity methods \cite{Duplantier00,MR2112128}, followed by Hastings who also computed a spectrum associated with the neighborhood of the SLE tip \cite{Hastings}. In Ref. \cite{BeSmSLE}, the first author and Smirnov provided a rigorous approach to the {\em average} integral means spectrum for whole-plane SLE, and showed that in a certain range of parameters (i.e., large enough negative moment orders), the tip spectrum dominates. 

An {\em unbounded} version of  whole-plane SLE was also studied in Refs. \cite{Hal-DNNZ,DNNZ} and \cite{IL,2012arXiv1203.2756L,LoYe} for which it was shown that for a large enough positive moment order, the (average) bulk integral mean spectrum undergoes a {\em phase transition} towards a novel  form, which was argued to be related to the point at infinity. For the {\em bounded} version of whole-plane SLE as studied in Ref. \cite{BeSmSLE},  which is related by inversion to the unbounded one, a similar transition phenomenon may thus occur near the origin, which can be interpreted as the starting point of the random curve (see Definition \ref{def:wplSLE} below). Indeed, it was observed in Refs. \cite{LoYe, DNNZ} that the analysis provided in Ref.  \cite{BeSmSLE} is  incomplete, and that the integral means spectrum could a priori be dominated by a novel spectrum, thought of as arising from the neighborhood of the starting point, and to be distinguished from that brought in by the vicinity of the tip. 

The purpose of this work is to complete the analysis undertaken in Ref. \cite{BeSmSLE}, so as to rigorously establish,  for the bounded version of whole-plane SLE as studied there, the form of  the averaged integral means spectrum for all moment orders. In particular, we show that for large enough negative moments, the new spectrum dominates the bulk one, but that both are still dominated by the tip spectrum in the same range of parameters. The results are summarized in Theorem \ref{thm:main}. The existence of the new spectrum is established starting in Section \ref{belowt1}. Following Ref. \cite{DNNZ}, the last Section \ref{Sectip} briefly discusses the relation of this spectrum to the derivative exponents of Ref. \cite{MR2002m:60159b} for standard radial SLE, or, equivalently, to the non-standard tip exponents of Ref. \cite{MR2112128}; it further heuristically suggests why the new spectrum should be associated with  the `second tip' of bounded whole-plane SLE, the image by inversion of the point at infinity in the unbounded version.

Before we proceed with the details of the analysis, let us also mention the work  by Johansson Viklund and Lawler \cite{0911.3983}, who established the almost sure version of the SLE tip multifractal spectrum, that by Alberts, Binder and Viklund \cite{Alberts2016} on the almost sure dimension spectrum for SLE boundary collisions, as well as the recent preprint  by Gwynne,  Miller, and Sun \cite{2014arXiv1412.8764G}, who used the so-called ``Imaginary Geometry'' of Miller and Sheffield \cite{ms2012imag1,ms2012imag2,ms2012imag3,ms2013imag4} to compute the a.s. value of the SLE bulk multifractal spectrum.

\subsection{Definitions and Statements} 
Let $\Omega=\C\setminus K$ where $K$ is a simply-connected compact set and let
$\phi$ be a Riemann mapping from $\D_-$ (i.e., the complement of the unit disk $\mathbb D$)  onto
$\Omega$ such that $\phi(\infty)=\infty$. {\em The integral means spectrum}
of $\phi$ (or $\Omega$) is defined as
$$
\beta_\phi(t)=\beta_\Omega(t)=\limsup_{r \to 1^+}
\frac{\log \int_0^{2\pi} |\phi'(re^{i\theta})|^td\theta}{-\log(r-1)}.
$$

For random fractals, it is natural to study {\em the average integral means spectrum}, which is defined as
$$
\bar\beta(t)=
\limsup_{r\to 1^+}\frac{\log \int_0^{2\pi} \E\brb{|\phi'(re^{i\theta})|^t}d\theta}{-\log|r-1|}.
$$
In this work, we are interested in the average integral means spectrum of whole-plane SLE curves. 
\begin{definition}\label{def:wplSLE}
Let $\xi(t)=\exp(i\sqrt{\kappa}B_t)$ be a two-sided Brownian motion on the unit circle with $t\in \mathbb R$ and $\kappa > 0$. The whole-plane SLE$_\kappa$ is the family of conformal maps $g_t$ satisfying
\begin{equation}
\label{eq:SLE}
\partial_t g_t(z)=g_t(z)\frac{\xi(t)+g_t(z)}{\xi(t)-g_t(z)},
\end{equation}
with initial condition
$$
\lim_{t\to-\infty}e^{t}g_t(z)=z, \qquad z\in \C\setminus \{0\}.
$$
\end{definition}
This map $g_t$ is a conformal map from $\mathbb C\setminus K_t$ onto ${\mathbb D}_{-}$, where the compact set $K_t$ is the so-called \emph{hull} of the SLE process, and it describes a family of hulls that grow from the origin towards infinity. This is the so-called {\em exterior} version of  whole-plane SLE, as the map $g_{0}$  may be seen as the limit of a rescaled version of a radial SLE process growing from the unit disc towards infinity \cite{BeSmSLE}.  The integral means spectrum of whole-plane SLE is then defined as that of the \emph{inverse map} $\phi=g_0^{-1}$. Another version  describes hulls growing from infinity towards the origin, and is called {\em interior} whole-plane SLE. The integral means spectrum of this \emph{unbounded} process is studied in Ref. \cite{DNNZ}.

\begin{remark}
The interior and exterior versions of whole-plane SLE are conjugate under the map $z\mapsto 1/z$, and  their integral means spectra can be unified in a single formalism by considering \emph {mixed moments}, involving powers of the moduli of the conformal map and of its derivative, as studied in Ref. \cite{2015arXiv150405570D}. 
\end{remark}
The main result of this paper is (Fig. \ref{fig})
\begin{theorem}
\label{thm:main}
i) The average integral means spectrum of the exterior whole-plane SLE$_\kappa$ is given by 
$$
\begin{aligned}
\beta_{\mathrm{tip}}(t)=-t-1+\frac{1}{4}\big(4+\kappa-\sqrt{(4+\kappa)^2-8\kappa t}\big), & \qquad t \le t_2, \\
\beta_0(t)=-t+\frac{4+\kappa}{4\kappa}\big(4+\kappa-\sqrt{(4+\kappa)^2-8\kappa t}\big),  & \qquad  t_2\le t \le t_3, \\
\beta_{\mathrm{lin}}(t)= t-\frac{(4+\kappa)^2}{16\kappa}, & \qquad t_3\le t.
\end{aligned}
$$
ii) If in the definition of the spectrum we integrate over any set that excludes the neighbourhood of $\theta=0$, which corresponds to excluding the influence of the tip of the curve, then the average spectrum is given by
$$
\begin{aligned}
\beta_1(t)=-t-\frac{1}{2}\big(1+\sqrt{1-2\kappa t}\big),& \qquad t \le t_1, \\
\beta_0(t)= -t+\frac{4+\kappa}{4\kappa} \big(4+\kappa-\sqrt{(4+\kappa)^2-8\kappa t}\big),& 
\qquad  t_1\leq t \le t_3, \\
 \beta_{\mathrm{lin}}(t)=t-\frac{(4+\kappa)^2}{16\kappa}, & \qquad t_3\le t.
\end{aligned}
$$
In the above, the transition values for the moment order $t$ are given by
$$
t_1:=-\frac{1}{128}(4+\kappa)^2(8+\kappa),\quad
t_2:=-1-\frac{3\kappa}{8},\quad
t_3:=\frac{3(4+\kappa)^2}{32\kappa},
$$
and such that $t_1 < t_2< t_3$.
\end{theorem}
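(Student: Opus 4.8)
The plan is to reduce the whole computation to the analysis of one second-order linear differential equation attached to the Loewner flow, and then to recover $\bar\beta$ as the upper envelope of the local exponents of that equation at its distinguished singular points. Writing $\phi=g_0^{-1}$ and evaluating at $z=(1+\delta)e^{i\theta}\in\D_-$ with $\delta\to 0^+$, I would apply Itô's formula along the flow \eqref{eq:SLE} to a mixed moment of the form $\E[\,|\phi'(z)|^t|\phi(z)|^{\lambda}\,]$ and use the domain Markov property together with the scaling covariance of whole-plane SLE to turn the resulting martingale condition into a linear PDE. Under the radial ansatz $\E[\cdots]\asymp \delta^{-\beta(t)}$ this collapses, as in Ref.~\cite{BeSmSLE}, to a hypergeometric-type ODE whose regular singular points correspond to the three distinguished locations of the process: the current tip $\xi(0)=1$ (the prime end $\theta=0$), the origin $0\in K_0$ (the starting point of the curve), and the point at infinity. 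The admissible integral-means exponents are the local indicial exponents of this ODE where its solution approaches the unit circle. The bulk spectrum $\beta_0$ and the tip spectrum $\beta_{\mathrm{tip}}$ I then take from Ref.~\cite{BeSmSLE}: both are expressed through the single exponent $\tfrac14\big(4+\kappa-\sqrt{(4+\kappa)^2-8\kappa t}\big)$ carried by the driving/tip singular point, $\beta_0$ arising from the regular (bulk) solution and $\beta_{\mathrm{tip}}$ from the solution with the admissible singularity localized at $\theta=0$.

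The heart of the proof is to extract the exponent associated with the remaining singular point, the origin --- the ``second tip,'' image under $z\mapsto 1/z$ of the point at infinity of the interior process of Ref.~\cite{DNNZ}. I would show that its admissible indicial exponent is $\tfrac12\big(1+\sqrt{1-2\kappa t}\big)$, so that the origin contributes $\beta_1(t)=-t-\tfrac12\big(1+\sqrt{1-2\kappa t}\big)$; the appearance of the different discriminant $\sqrt{1-2\kappa t}$ reflects that this is exactly the standard radial-SLE derivative exponent of Ref.~\cite{MR2002m:60159b}, the link to be developed in Section~\ref{Sectip}. Making this rigorous rather than merely formal is the step I expect to be the main obstacle. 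For the upper bound $\bar\beta\le\beta_1$ on the relevant range I would exhibit an explicit \emph{positive} solution of the ODE with this boundary exponent and use it as a supersolution to dominate the moment; for the matching lower bound I would restrict the $\theta$-integral of $\E[|\phi'|^t]$ to a neighbourhood of the origin prime end and estimate it from below against radial-SLE derivative moments. Controlling this localized contribution uniformly in $\delta$, and ruling out any strictly faster growth, is where the real work lies.

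Finally I would assemble the phase diagram. Since the $\limsup$ selects the largest exponent, $\bar\beta$ is the upper envelope of $\beta_0,\beta_{\mathrm{tip}},\beta_1$ and the universal linear branch $\beta_{\mathrm{lin}}(t)=t-\tfrac{(4+\kappa)^2}{16\kappa}$, the last taking over at $t_3$, where the line is tangent to $\beta_0$ (the tangency condition $\beta_0'(t_3)=1$ simultaneously fixing $t_3=\tfrac{3(4+\kappa)^2}{32\kappa}$ and forcing $C^1$ matching). Equating $\beta_{\mathrm{tip}}=\beta_0$ yields $t_2=-1-\tfrac{3\kappa}{8}$ and equating $\beta_1=\beta_0$ yields $t_1$. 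For part~(i) all branches compete, and one verifies $\beta_{\mathrm{tip}}(t)\ge\beta_1(t)$ for every $t\le t_2$, so the origin spectrum is always superseded by the tip spectrum; for part~(ii) removing the neighbourhood of $\theta=0$ deletes the tip branch, leaving the competition of $\beta_1,\beta_0,\beta_{\mathrm{lin}}$ with transitions at $t_1$ and $t_3$. The explicit value of $t_1$ and the ordering $t_1<t_2<t_3$ then follow as routine algebraic consequences of these crossing and tangency conditions.
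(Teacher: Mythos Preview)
Your outline captures the right list of exponents and the correct phase diagram, but it misses the actual obstruction that this paper is written to overcome, and your proposed fix for it would not work as stated.

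The central difficulty for $t<t_1$ is not to \emph{discover} the branch $\beta_1$; the power-law trial function $\psi_1=(z\bar z-1)^{-\beta_1}u^{\gamma_1}$ with $A(\gamma_1)=0$ is easy to write down. The problem is that neither $\psi_1$ nor the Beliaev--Smirnov function $\psi_0$ individually gives a positive sub-/super-solution once $t<t_1$. For $\psi_0$ the hypergeometric factor $g_0(u)$ changes sign on $(0,4)$, so positivity fails; this is exactly why the argument in Ref.~\cite{BeSmSLE} breaks down at $t_1$. For $\psi_1$ one computes (Eq.~\eqref{eq:action_tilde}) that $\Lambda\psi_1/\psi_1$ contains the term $(z\bar z-1)^2 u^{-2}\,(1+\tfrac\kappa2)\gamma_1$, which is of order $u^{-1}$ in the regime $|z|-1\asymp u^{1/2}$ and is \emph{not} dominated by any logarithmic correction $\ell_\delta$. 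So your plan to ``exhibit an explicit positive solution of the ODE with this boundary exponent and use it as a supersolution'' cannot succeed with $\psi_1$ alone: near $\theta=0$ the sign of $\Lambda(\psi_1\ell_\delta)$ is fixed by the non-cancelling $O(u^{-1})$ term, not by $\delta$.

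The paper's resolution is to \emph{mix} the two: set $\psi=\sigma\psi_0+\psi_1$ with $\sigma=\sign g_0(0)$. Lemma~\ref{lemma:positive} shows $\psi>0$ near $\partial\D$ because $\psi_1$ dominates where $g_0$ may be negative (using $\beta_1>\beta_0$), while $\sigma\psi_0>0$ near $u=0$. Lemma~\ref{lemma:sign} then shows $\Lambda(\psi\ell_\delta)$ has constant sign, via a three-regime analysis in which the crucial Case~3 (both $u$ and $r-1$ small) relies on the inequality $\beta_{\mathrm{tip}}>\beta_1$ (Proposition~\ref{pr:t_tip}) to make $\psi_0$ absorb the bad $O(u^{-1})$ term coming from $\psi_1$. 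The single mixed $\psi$ then yields both parts of the theorem simultaneously: integrating over the full circle picks up $\beta_{\mathrm{tip}}$ from the $u^{\gamma_0}$ singularity of $\psi_0$ at $\theta=0$, while integrating away from $\theta=0$ leaves $\beta_1$ from $\psi_1$ as the dominant exponent (since $\beta_1>\beta_0$ there). Your ``upper envelope of indicial exponents at three singular points'' picture is also not quite right: the PDE is two-dimensional, the boundary ODE lives in $u\in[0,4]$ with singularities at $u=0$ and $u=4$ only, and $\beta_1$ arises from the algebraic choice $A(\gamma)=0$ in the same ansatz, not from an additional singular point at the origin.
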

\begin{figure}[htbp]
\begin{center}
\includegraphics[scale=0.7]{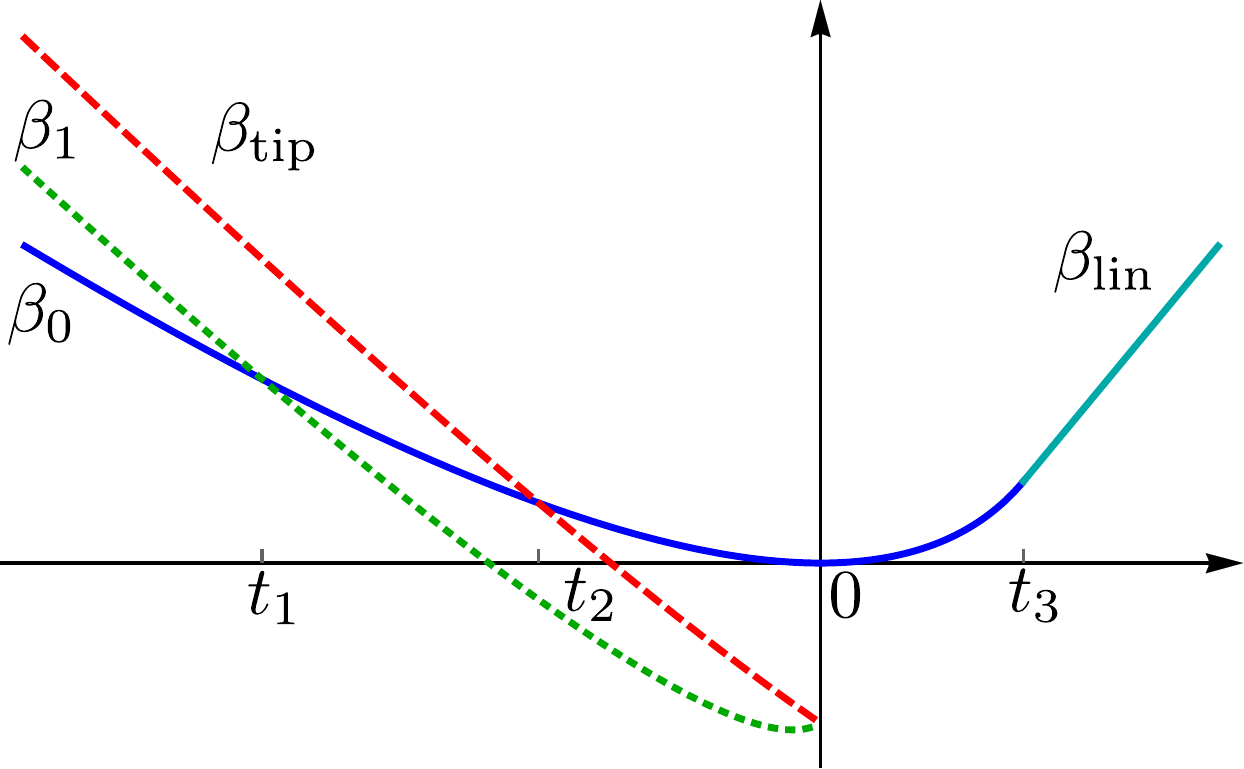}
\caption{Plot showing the relative positions of the various forms taken by the integral means spectrum in Theorem \ref{thm:main}: tip spectrum $\beta_{\mathrm{tip}}$, novel spectrum $\beta_1$, bulk spectrum $\beta_0$ and linear spectrum $\beta_{\mathrm{lin}}$. The $\beta_1$ spectrum is the analogue for the exterior whole-plane  SLE of the spectrum at infinity of Refs. \cite{DNNZ, LoYe} for the interior case (see also Ref. \cite{2015arXiv150405570D}). Note that $\beta_1$ supersedes the bulk spectrum below $t_1$, but stays below the tip spectrum.}
\label{fig}
\end{center}
\end{figure}
\begin{remark}
Point {\it i)} in Theorem \ref{thm:main} agrees with Ref. \cite{BeSmSLE}, while Statement {\it ii)} pertains to this work.
\end{remark}

\section{Differential operator}
As seen above, the integral means spectrum is defined for a map  from $\D_-$ onto some domain, i.e., in terms of the inverse of the SLE map. The Loewner equation for $g_t^{-1}$ is a PDE instead of an ODE and is much harder to work with. Instead, we study the {\em backward evolution}. It is well-known that for a standard radial [exterior] SLE process $\hat g_t$ (i.e., obeying Eq. \eqref{eq:SLE} for $t\geq 0$ 
with $\hat g_0(z)=z$), reversing time, i.e., changing sign in front of  Eq. \eqref{eq:SLE}, leads to solutions $f_t:=\hat g_{-t}, t\geq 0$, also called [backward] radial SLE,  that have, up to conjugation by rotation, the same distribution as $\hat g_t^{-1}$ (see Ref. \cite{BeSmSLE}, Lemma 1, which is an analog of Lemma 3.1 in Ref.  \cite{RoSch}). 

To compute the average integral means spectrum, let us then introduce the function
 $\tilde F(z;\tau):=\E\brb{|f'_\tau(z)|^t}$, where $f_\tau(z), \tau\geq 0,$ is such 
a backward radial SLE$_\kappa$ process. (Actually, this function depends also on the moment order $t$ and on $\kappa$, 
but these are fixed throughout the proof and we will no longer mention the dependence thereof.) The function $\tilde F$ is $C^{\infty,1}$  in $z,\tau$, since $f'_\tau(z)$ is by Loewner theory, and this remains true under expectation by standard dominated convergence and distortion theorems. As was shown in Ref. \cite{BeSmSLE}, it satisfies a parabolic PDE: 
\begin{lemma}
\label{lemma:martingale}
The function $\tilde F(z;\tau)$ is a solution to the PDE in variables $r,\theta,\tau$,
\begin{equation}
\label{eq:martingale0}
\begin{aligned}
t \frac{r^4+4 r^2(1-r \cos \theta)-1}{(r^2-2 r \cos \theta +1)^2}\tilde F+
\frac{r(r^2-1)}{r^2- 2 r \cos\theta+1}\tilde F_r 
\\
-\frac{2 r \sin\theta}{r^2-2 r \cos \theta +1} \tilde  F_\theta+
\frac{\kappa}{2} \tilde F_{\theta,\theta}-\tilde F_\tau=0,
\end{aligned}
\end{equation}
where $z=re^{i\theta}$.
\end{lemma}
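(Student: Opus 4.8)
The plan is to quotient out the rotational symmetry of the driving so that the relevant quantity becomes a multiplicative functional of an \emph{autonomous} diffusion, and then to recognise $\tilde F$ as the solution of a backward Kolmogorov (Feynman--Kac) equation whose generator and potential can be read off directly. Writing $h(w):=w\,(\xi+w)/(\xi-w)$, the backward flow obeys $\partial_\tau f_\tau(z)=-h(f_\tau(z))$, while $\xi_\tau=\exp(i\sqrt\kappa B_\tau)$ satisfies $d\xi_\tau=i\sqrt\kappa\,\xi_\tau\,dB_\tau-\tfrac\kappa2\,\xi_\tau\,d\tau$. Setting $\tilde w_\tau:=f_\tau(z)\,\overline{\xi_\tau}$, Itô's formula (using that $f_\tau$ has finite variation in $\tau$, so there is no cross term) gives
\[
d\tilde w_\tau=\Big(-h(\tilde w_\tau)-\tfrac\kappa2\,\tilde w_\tau\Big)d\tau-i\sqrt\kappa\,\tilde w_\tau\,dB_\tau ,
\]
an autonomous SDE in which now $h(\tilde w)=\tilde w(1+\tilde w)/(1-\tilde w)$. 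Crucially $|\tilde w_\tau|=|f_\tau(z)|$ and, differentiating the flow in $z$, $|\partial_z\tilde w_\tau|=|f'_\tau(z)|$, so $\tilde F$ is unaffected by this reduction; by rotational invariance I may take $\xi_0=1$, so that $\tilde w_0=z=re^{i\theta}$.

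Next I would pass to logarithmic variables, where the $-\tfrac\kappa2$ drift and the Itô correction cancel:
\begin{align*}
d\log\tilde w_\tau&=-\frac{1+\tilde w_\tau}{1-\tilde w_\tau}\,d\tau-i\sqrt\kappa\,dB_\tau ,\\
d\log\partial_z\tilde w_\tau&=-h'(\tilde w_\tau)\,d\tau-i\sqrt\kappa\,dB_\tau .
\end{align*}
Taking real and imaginary parts with $\tilde w=re^{i\theta}$ and $D:=r^2-2r\cos\theta+1$, and using the elementary identities $\Re\frac{1+\tilde w}{1-\tilde w}=\frac{1-r^2}{D}$ and $\mathrm{Im\,}\frac{1+\tilde w}{1-\tilde w}=\frac{2r\sin\theta}{D}$, the pair $(r_\tau,\theta_\tau)$ is a time-homogeneous diffusion with $dr_\tau=\frac{r(r^2-1)}{D}\,d\tau$ and $d\theta_\tau=-\frac{2r\sin\theta}{D}\,d\tau-\sqrt\kappa\,dB_\tau$, so $r$ carries no martingale part and $\theta$ has diffusion coefficient $\kappa$. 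Its generator is therefore
\[
\mathcal A=\frac{r(r^2-1)}{D}\,\partial_r-\frac{2r\sin\theta}{D}\,\partial_\theta+\frac\kappa2\,\partial_{\theta\theta},
\]
which already reproduces the three differential terms of \eqref{eq:martingale0}. The second logarithmic equation integrates to $\log|f'_\tau(z)|=-\int_0^\tau\Re h'(\tilde w_s)\,ds$, so $|f'_\tau(z)|^t=\exp\big(-t\int_0^\tau\Re h'(\tilde w_s)\,ds\big)$ is a multiplicative path functional of the $(r,\theta)$-diffusion.

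Finally I would invoke the Feynman--Kac formula: since $\tilde F(z;\tau)=\E\big[\exp(-\int_0^\tau V(\tilde w_s)\,ds)\big]$ with potential $V:=t\,\Re h'$ and initial datum $\tilde F(\cdot;0)\equiv1$, the $C^{\infty,1}$ regularity asserted before the statement justifies that $\tilde F$ solves $\partial_\tau\tilde F=\mathcal A\tilde F-V\tilde F$. It then remains only to check the single algebraic identity $-\Re h'(\tilde w)=\big(r^4+4r^2(1-r\cos\theta)-1\big)/D^2$, which follows by writing $h'(w)=(1+2w-w^2)/(1-w)^2$, clearing the denominator with the factor $(1-\bar w)^2$, and taking the real part; this turns $-V\tilde F$ into the zeroth-order term of \eqref{eq:martingale0} and completes the proof.

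The routine but delicate points are precisely these algebraic simplifications (the clean cancellation of the Itô and $-\tfrac\kappa2$ drifts in the logarithmic equations, and the real-part computation for the potential), together with the analytic justification of Feynman--Kac, i.e.\ the smoothness and integrability of $\tilde F$ for the relevant range of $r$, for which one appeals to the Loewner regularity and distortion estimates already quoted above. I expect the main conceptual obstacle to be setting up the reduction to the autonomous relative process correctly, since it is exactly this stationarity (rotational invariance of the driving together with the cocycle structure of the flow) that removes any explicit $\tau$-dependence and lets the time derivative be expressed purely through the spatial generator $\mathcal A$.
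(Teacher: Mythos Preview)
Your argument is correct and is essentially the approach of Ref.~\cite{BeSmSLE}, to which the paper defers for this lemma: reduce to the autonomous process $\tilde w_\tau=f_\tau(z)/\xi_\tau$, read off the generator of the $(r,\theta)$-diffusion, and identify $|f'_\tau|^t$ as a multiplicative functional so that Feynman--Kac (equivalently, It\^o's formula applied to $|f'_\tau|^t$ followed by taking expectations) yields~\eqref{eq:martingale0}. Your algebraic checks, in particular the cancellation of the $-\tfrac\kappa2$ drift against the It\^o correction in $d\log\tilde w_\tau$ and the identity $-\Re h'(\tilde w)=\big(r^4+4r^2(1-r\cos\theta)-1\big)/D^2$, are accurate.
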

Lemma 3 in Ref. \cite{BeSmSLE} further shows that there exists a limit to $e^{-\tau}f_\tau(z)$ as $\tau\to +\infty$, which has the same  distribution as $g_0^{-1}(z)$, where $g_t$ is a whole-plane SLE process as in Definition \ref{def:wplSLE};  hence we introduce
$$
F(z):=\lim_{\tau\to +\infty} e^{-\tau t}\tilde F(z;\tau),
$$
where limit and expectation commute by the same arguments as above. The average integral means spectrum of the exterior whole-plane SLE is thus associated with the singular behavior of 
$ \int_0^{2\pi} F(r e^{i\theta})d\theta$ as $r\to 1^+$.

Multiplying Eq. (\ref{eq:martingale0}) by $e^{-\tau t}$, and passing to the limit, shows that $F$ is a solution to the PDE in $r,\theta$,
\begin{equation}
\label{eq:martingale}
\begin{aligned}
t \br{\frac{r^4+4 r^2(1-r \cos \theta)-1}{(r^2-2 r \cos \theta +1)^2}-1} F+
\frac{r(r^2-1)}{r^2- 2 r \cos\theta+1} F_r
\\
-\frac{2 r \sin\theta}{r^2-2 r \cos \theta +1}   F_\theta+
\frac{\kappa}{2}  F_{\theta,\theta}=0.
\end{aligned}
\end{equation}
The exchange above, of the $\tau \to +\infty$ limit and of partial derivativation of $\tilde F(z;\tau)$ with respect to $r$ and $\theta$, is justified by the fact that the $\tau$-family, $e^{-\tau}f'_\tau(z)$, and all its $z$-derivatives are normal, i.e., uniformly bounded in any compact of $\mathbb D_-$, so that the spatial derivatives of $e^{-\tau t}\tilde F(z;\tau)$ form an equicontinuous family. A further requirement is that $\lim_{\tau \to +\infty} \frac{\partial}{\partial \tau} \big(e^{-\tau t} \tilde F\big)=0$. Use of the Schramm-Loewner equation \eqref{eq:SLE} for $f_\tau$ shows that
$$\frac{\partial}{\partial \tau} |e^{-\tau}f'_\tau(z)|^t=2t |e^{-\tau}f'_\tau(z)|^t \Re \frac{\xi(\tau)^2}{(f_\tau(z)-\xi(\tau))^2}.$$
Classical Koebe distortion theorems then show that the right-hand side is bounded by $C(z) e^{-2\tau}$, with $C$  defined on $\D_-$; this insures both the validity of the exchange of expectation and $\tau$-derivation, and the vanishing limit above. 

It is easy to see that after a change of variables, this equation is still of \emph {parabolic} type, where $\theta$ plays the r\^ole of a spatial variable, and $r\to 1^+$ corresponds to time going to infinity \cite{BeSmSLE}.  

Instead of polar coordinates, it turns out to be more convenient to work with $(z$, $\bar{z})$ coordinates, where $F=F(z,\bar z)$ now formally depends on both complex variables, and  where Eq. \eqref{eq:martingale} becomes,
\begin{align}
\label{eq:operator}
&\Lambda F(z,\bar z)=0,\\ \nonumber
&\Lambda:=-\frac{\kappa}{2}\br{z\d -\zb\db}^2+\frac{z+1}{z-1}z\d +\frac{\zb+1}{\zb-1}\zb\db-
t\br{\frac{1}{(z-1)^2}+\frac{1}{(\zb-1)^2}}.
\end{align}
(See Ref. \cite{DNNZ} for details.)

We wish to study how the solutions to Eq. \eqref{eq:operator} behave when $z$ approaches the unit circle. In contradistinction to the interior case \cite{DNNZ}, it seems difficult to construct explicit solutions, so we are left with  constructing sub- and super-solutions with same boundary behaviors \cite{BeSmSLE}. 

Following Refs. \cite{BeSmSLE,DNNZ}, we consider the action of $\Lambda$ on functions of the peculiar form,
\begin{equation}\label{eq:psi}
\psi(z,\zb):=(z\zb-1)^{-\beta}g({u})=(|z|^2-1)^{-\beta}g(u),
\end{equation}
where $g$ is a $C^2$ function of ${u}:=(1-z)(1-\zb)=|1-z|^2$.
By looking at the leading terms in Eq. \eqref{eq:operator} for $\psi$ as $r\to 1^+$, one obtains a `boundary equation' for $g$ (see Refs \cite{BeSmSLE,DNNZ} for details),
\begin{equation}
\label{eq:boundary}
\br{t(2-{u})-2\beta}g({u})+\br{\frac{\kappa}{2}(2-{u})-(4-{u})}{u} g'({u})+\frac{\kappa}{2}(4-{u}){u}^2g''({u})=0;
\end{equation}
from now on, we assume that $g$ in \eqref{eq:psi} satisfies this equation. 

Let us then consider the action of the differential operator \eqref{eq:operator} on $\psi$ \eqref{eq:psi} in $\mathbb D_-$ and follow Ref. \cite[Section 4.2.]{DNNZ}. Making use of Eq. \eqref{eq:boundary} to eliminate the second derivative of $g$, one obtains after some computation, 
\begin{equation}
\label{eq:action}
\begin{aligned}
\frac{\Lambda \psi}{\psi}&=
(z\zb-1)\brb{\frac{1}{{u}}(t-\beta)+\frac{1}{4-{u}}\br{\beta+t}- 
\br{\frac{\kappa}{2}-1-\frac{2\kappa}{4-{u}}}\frac{g'({u})}{g({u})}}\\
&+\frac{(z\zb-1)^2}{{u}^2}\brb{\frac{1}{4-{u}}\br{-2t-2\beta-\kappa{u}\frac{g'({u})}{g({u})}}+\br{\frac{\kappa}{2}+1}\br{{u}\frac{g'({u})}{g({u})}}}.
\end{aligned}
\end{equation}

We are also interested in the action of $\Lambda$ on $\psi$ functions with logarithmic corrections \cite{BeSmSLE}. Let us introduce 
$$
\ell_{\delta}=\ell_{\delta}(z,\zb):=\br{-\log(z\zb-1)}^{\delta}.
$$
It is then easy to see that, 
\begin{equation}
\label{eq:action_log}
\frac{\Lambda(\psi \ell_{\delta})}{\psi \ell_{\delta}} =\frac{ \Lambda(\psi)}{\psi}-\frac{2\delta z\zb}{{u}(-\log(z\zb-1))}.
\end{equation}
For certain choices of exponent $\beta$ and of $g$, some leading terms in \eqref{eq:action} may cancel out so that the second term on the r.h.s. of Eq. \eqref{eq:action_log} dominates.  The latter has a sign opposite to the arbitrary sign of $\delta$, which means that if $\psi$ is positive, we shall be able construct sub- and super-solutions $\psi \ell_{\delta}$ with growth rate arbitrary close to that of $\psi$.

\section{Boundary solutions and their extension to $\mathbb D_-$}
\subsection{Hypergeometric equation}
We shall  be interested in functions $g$ of the form $g(u)={u}^\gamma g_0({u})$, $\gamma\in \mathbb R$, and such that Eq. \eqref{eq:boundary} may reduce to a hypergeometric-type equation on $g_0$ \cite{BeSmSLE}. 
Upon substituting this  into  \eqref{eq:boundary} and factoring out $u^\gamma$, we obtain,
\begin{align} \label{eq: eq for g0}
& \left(2\beta(\gamma)-2\beta +A(\gamma)u\right)g_0(u) \\   \nonumber &+\left(\frac{\kappa}{2}(2-u)+(\kappa\gamma-1)(4-u)\right)\,ug_0'(u)+\frac{\kappa}{2} (4-u)\, u^{2}g_0''(u)=0,
\end{align}
in terms of the quadratic polynomials,
\begin{align}
\label{eq:betagamma}
& \beta(\gamma):=\kappa\gamma^2-\left(\frac{\kappa}{2}+2\right)\gamma +t, \\
\label{eq:A}
&A(\gamma):=- \frac{\kappa}{2}\gamma^2+\gamma-t.%
\end{align} 
For the choice of parameter $\beta=\beta(\gamma)$,  Eq. \eqref{eq: eq for g0} then reduces to an hypergeometric equation, 
\begin{eqnarray} \label{eq:hypergeom}  A(\gamma) g_0(u)   +\left[\frac{\kappa}{2}(2-u)+(\kappa\gamma-1)(4-u)\right] g_0'(u)+\frac{\kappa}{2} (4-u)u g_0''(u)=0,\end{eqnarray}  
which is the same as Eqs. \cite[(184)]{DNNZ} or \cite[(17)]{BeSmSLE}. Following either of these papers, we see that the general solution is, 
\begin{equation}\label{eq:g0}
g_0(u)=C_0\,\F(a,b,c,{u}/4)-C_0'\, ({u}/4)^{1/2-a-b}\F(a',b',c',{u}/4),
\end{equation} 
where
\begin{align}
\label{eq:abc}
&a=a(\gamma)=\gamma-\gamma_+, \quad b=b(\gamma)=\gamma-\gamma_-, \quad c=\frac{1}{2}+a+b,\\ \nonumber
&a'=\frac{1}{2}-a, \quad b'=\frac{1}{2}-b, \quad c'=\frac{1}{2}+a'+b',
\end{align}
and where $\gamma_{\pm}:=(1\pm\sqrt{1- 2\kappa t})/\kappa$ are the two roots in  
$A(\gamma)=-\frac{\kappa}{2}(\gamma-\gamma_+)(\gamma-\gamma_-)$.

Hypergeometric functions are singular at ${u}=4$, but the solution should be smooth at $z=-1$, which means that the coefficients $C_0$ and $C_0'$ should be chosen in such a way that the singular parts cancel out \cite{BeSmSLE}. This precisely happens, up to a constant factor, for 
\begin{equation}
\label{eq:C0}
g_0(0)=C_0=\frac{\Gamma(3/2-a-b)}{\Gamma(1/2-a)\Gamma(1/2-b)},\,\,\,C_0'=\frac{\Gamma(c)}{\Gamma(a)\Gamma(b)},
\end{equation}
such that near $u=4$ (see Ref. \cite{BeSmSLE}),
\begin{equation}
\label{eq:g04}
g_0(u)= \frac{1}{\sqrt{\pi}}\left(\frac{1}{2}-a-b\right)+ O(4-u).
\end{equation}
\subsection{Action of $\Lambda$ on trial functions $\psi$}
Following Ref. \cite[Section 4.2.3]{DNNZ}, let us now return to the action \eqref{eq:action}, this time for $\psi(z,\bar z)=(z\zb-1)^{-\beta}{u}^\gamma g_0({u})$, and for the choice $\beta=\beta(\gamma)$ \eqref{eq:betagamma}.  Using $ug'/g=ug'_0/g_0+\gamma$,  
we finally obtain, 
\begin{align} \nonumber\frac{\Lambda\psi}{\psi}=&(z\bar z-1)\left[\frac{1}{u}\big[t+\gamma-\beta(\gamma)\big]-\frac{2A(\gamma)}{4-u}-\left(\frac{\kappa}{2}-1-\frac{2\kappa}{4-u}\right)\frac{g'_0(u)}{g_0(u)}\right] \\ \nonumber &+\frac{(z\bar z-1)^2}{u^2} \left[\frac{1}{4-u}\left(4A(\gamma)-\kappa u\frac{g_0'(u)}{g_0(u)}\right)+\left(1+\frac{\kappa}{2}\right)\left(\gamma+u\frac{g_0'(u)}{g_0(u)}\right)\right].\\ 
&& \label{eq:actionbis}
\end{align}
In Eq. \eqref{eq:actionbis}, there seems to be an unexpected singularity at ${u}=4$ ($z=-1$), but in fact the choice of constants $C_0$ and $C_0'$ in Eq. \eqref{eq:C0}, made to ensure that $g_0(u)$ is regular at $u=4$,  further yields
$$
\frac{g_0'(u)}{g_0(u)}=-\frac{1}{2}ab+O(4-{u})=\frac{1}{\kappa}A(\gamma)+O(4-{u}),
$$
so that the singularity at ${u}=4$ cancels out in the action \eqref{eq:actionbis}.

In the $u \to 0$ limit, Eq. \eqref{eq:g0} shows that (up to a non-vanishing coefficient)
\begin{equation}
\label{eq:alpha}
u\frac{g_0'(u)}{g_0(u)}\sim u^{\alpha},\,\,\, \alpha:=\min\left\{\frac{1}{2}-a-b,1\right\},
\end{equation}
\begin{remark}\label{remark2}
In the case where $C_0$ vanishes,  $\alpha=1/2-a-b$.
\end{remark}
The second line in Eq. \eqref{eq:actionbis} is in this limit, 
\begin{align}\label{eq:Cuto0}
\frac{(z\bar z-1)^2}{u^2} \left[C(\gamma)+O(u^{\alpha})\right],\end{align}
with 
\begin{align}\label{eq:C}C(\gamma):=A(\gamma)+\left(1+\frac{\kappa}{2}\right)\gamma=-\frac{\kappa}{2}\gamma^2+\left(2+\frac{\kappa}{2}\right)\gamma-t.
\end{align}

\subsection{Beliaev-Smirnov solution}\label{BS}
The first pair of exponents $\beta$ and $\gamma$ that we are interested in have been introduced in Ref. \cite{BeSmSLE}, so as to cancel the leading singularity in \eqref{eq:Cuto0}. 
For a given $t$, there are two solutions to $C(\gamma_0)=0$, and we consider the particular values,
\begin{equation}
\label{eq:beta_not}
\begin{cases}
\gamma_0=\frac{1}{2\kappa}(4+\kappa-\sqrt{(4+\kappa)^2-8\kappa t}),\\
\beta_0=\beta(\gamma_0)=\frac{\kappa}{2}\gamma_0^2=-t+(2+\frac{\kappa}{2})\gamma_0=-t+\frac{4+\kappa}{4\kappa}(4+\kappa-\sqrt{(4+\kappa)^2-8\kappa t}).
\end{cases}
\end{equation}

The first condition is for the boundary solution $g_0$ \eqref{eq:g0} to be bounded (i.e., ${u}^{\gamma_0}$ should be the only singular term in $g$), which means that $1/2-a-b\ge 0$, where now $a=a(\gamma_0), b=b(\gamma_0)$ as in Eqs.  \eqref{eq:abc}. Simple algebra shows that this is equivalent to $t \le t_3$. 

The second condition is that $g_0(u)$ should be positive for the whole range $u\in[0,4]$ when $z$ describes the unit circle. It was observed in Ref. \cite{BeSmSLE} that this happens when $1/2-b>0$. It was erroneously stated there that this is always true, while in fact this holds only for $t_1<t$. (See Proposition \ref{pr:b} below.) 

For $t\in(t_1,t_3)$ we thus have that $g_0$ is bounded and positive, and the arguments in Ref. \cite{BeSmSLE}, which we detail and refine here, stay valid. To study the spectrum, one has to analyze the behavior of $\Lambda \psi_0$,  where $\psi_0(z,\bar z)=(z\zb-1)^{-\beta_0}{u}^{\gamma_0}g_0({u})$ for values $\beta_0, \gamma_0$  as in Eq. \eqref{eq:beta_not}, and the associated hypergeometric function $g_0$ \eqref{eq:g0}. This is provided by Eq. \eqref{eq:actionbis}, where \eqref{eq:C} and  \eqref{eq:beta_not} give the explicit coefficients, $t+\gamma_0-\beta(\gamma_0)=2t-\left(1+\frac{\kappa}{2}\right)\gamma_0$, and $A(\gamma_0)=-\left(1+\frac{\kappa}{2}\right)\gamma_0$. 

As a preparation also for a complete analysis below, let us detail various radial limits  as $|z|\to 1$ in action \eqref{eq:actionbis}, while recalling the geometrical constraint,
$|z|-1\leq |z-1|=u^{1/2}$.\\  
{\em Generic case:} $z\zb-1\to 0$, but ${u}$ is bounded away from $0$. In this case, 
$\Lambda \psi_0/\psi_0=O(z\zb-1)$.\\
{\em Special case:} $z$ approaches $1$, $u\to 0$, so that $z\zb-1=O({u}^{1/2})$. In this case $g_0'/g_0=O({u}^{\alpha}/{u})$ in \eqref{eq:alpha} is dominated by $1/u$, as we have shown before that $1/2-a-b > 0$ for $t < t_3$, so that $\alpha>0$. Thus the first line in \eqref {eq:actionbis} is of order $O({u}^{-1/2})$. The second line, as given by Eq. \eqref{eq:Cuto0}, seems to be of  order $O({u}^{-1})$, but because of the very choice of $\gamma=\gamma_0$ such that $C(\gamma_0)=0$, 
 it is of order $O({u}^{\alpha-1})$, with $\alpha>0$ for $t<t_3$.\\  The results of this discussion can simply be recast as, 
 \begin{equation}
\label{eq:action_not}
\frac{\Lambda \psi_0}{\psi_0}=\frac{z\bar z-1}{{u}}O(u^0)+\frac{(z\bar z-1)^2}{{u}^2}O(u^{\alpha}),\,\,\, |z|-1\leq u^{1/2}.
\end{equation} 
To complete the proof of Theorem \ref{thm:main} in the range $t\in(t_1,t_3)$, one then considers as in Ref. \cite{BeSmSLE},  the set of logarithmic modifications of $\psi_0$ for all $\delta$,
$$
\ell_\delta\psi_0(z,\bar z)=(-\log(z\zb-1))^{\delta}(z\zb-1)^{-\beta_0}{u}^{\gamma_0}g_0({u}).
$$
Recalling the action \eqref{eq:action_log}, one sees  that for $|z|$ close enough to $1$ and for all $u$, both terms in   $\Lambda \psi_0/\psi_0$ \eqref{eq:action_not}  are dominated by  the logarithmic second term on the r.h.s. of \eqref{eq:action_log}:
\begin{equation}\label{eq:log}
\left|\frac{\Lambda \psi_0}{\psi_0}\right|\leq \frac{2|\delta| z\zb}{{u}(-\log(z\zb-1))}.
\end{equation} 
The positive function $\psi_0\ell_\delta$ is thus a  sub-solution for $\delta>0$ or a super-solution for $\delta<0$, so that the integral means spectrum is $\beta_0(t)$ for $t\in [t_2,t_3)$, or $\beta_{\mathrm{tip}}(t)=\beta_0(t)-2\gamma_0(t) -1$, for $t\in(t_1, t_2)$ where $2\gamma_0(t) +1<0$.  

By H\"older's inequality, the spectrum is convex, and  by standard distortion theorems, bounded by $t$ for $t>0$. This, together with the fact that $\partial_t \beta_0(t_3)=1$,  establishes Theorem \ref{thm:main} for $t\ge t_3$. 

\subsection{Below $t_1$}\label{belowt1}
At $t_1$, $b=1/2$, and  $g_0(0)=C_0=0$ in Eq. \eqref{eq:C0}. Actually, this vanishing also happens for higher half-integer values  $b=n+1/2$.  Proposition \ref{pr:tkn} below shows that there indeed exist a finite set of integers $ \mathcal J_\kappa$, and a finite discrete set $\mathcal T_\kappa$ of moment orders,  
\begin{align} \label{eq:Jk}
& \mathcal J_\kappa:=\{n\in \mathbb N, 0\leq n\leq \lfloor\kappa^{-1}\rfloor\}\\
 \label{eq:tk}
& \mathcal T_\kappa:=\{t_{1-n}, n\in \mathcal J_\kappa\}\\ \label{eq:tkn}
& t_{1-n}=t_{1-n}(\kappa):=-\frac{(1+2n)(8+\kappa-2n\kappa)(4+\kappa+2n\kappa)(4+\kappa-2n\kappa)}{128(1-n\kappa)^2},
\end{align}
such that,  
\begin{align} \label{eq:g0tk} b=b(\gamma_0(t_{1-n}))=n+\frac{1}{2},\,\,\,n\in \mathcal J_\kappa,\,\,\,
g_0(0)=C_0=0,\,\,\,t\in \mathcal T_\kappa.
\end{align}The $n=0$ case corresponds, for any value of $\kappa$, to the point $t_1$ as above; note also that  for $\kappa >1$, $\mathcal T_\kappa=\{t_1\}$, whereas strictly positive values of $n$ exist in $\mathcal J_\kappa$ only for $0<\kappa\leq 1$. 

In Eq. \eqref{eq:C0}, observe now that $1/2-a>0$, because $a=a(\gamma_0)<0$ in Eq. \eqref{eq:abc}, and recall that $1/2-a-b >0$ for $t<t_3$. Therefore, the sign of $g_0(0)$ \eqref{eq:C0} is given by that of $\Gamma(1/2-b)$, and by the very property of analytical continuation of the $\Gamma$-function, is thus alternating in the successive intervals $t\in(t_{-n},t_{1-n})$, being positive or negative for $n$ odd or even, respectively.  For later convenience, let us then introduce,
\begin{equation}\label{eq:sgng0}
\sigma=\sigma(t):=\sign g_0(0)=(-1)^{n-1},\,\,\, t\in(t_{-n},t_{1-n}),\,\,\,n\in \mathcal J_\kappa.
\end{equation} 
Notice that owing to Eq. \eqref{eq:g04}, $g_0(4)$ is always {\it positive} for $t<t_3$. Then, in the interval of moment orders, $t\in(t_{-n},t_{1-n})$, with $n\in \mathcal J_\kappa$, the graph of $g_0(u)$ possesses exactly $n+1$ simple zeroes over the interval $u\in (0,4)$. 
\subsection{Power-law solution}
Note that the hypergeometric equation \eqref{eq:hypergeom} becomes degenerate when $A(\gamma)=0$, with $g_0$ a constant solution and $g$ of the form ${u}^\gamma$. 
As before, there are two solutions, $\gamma_{\pm}$, to this system, which correspond to the degenerate cases $a=0$ or $b=0$ in Eqs. \eqref{eq:g0} and \eqref{eq:abc}, and to $C_0=0$ in Eq. \eqref{eq:C0}. As in the case of interior whole-plane SLE \cite{DNNZ}, we are especially interested in the pair,
\begin{equation}
\label{eq:beta_tild}
\begin{cases}
\tg:=\gamma_+=\frac{1}{\kappa}(1+\sqrt{1-2\kappa t}),\\
 \tb=\beta(\tg)=-t-\frac{\kappa}{2}\tg=-t-\frac{1}{2}(1+\sqrt{1-2\kappa t}).
\end{cases}
\end{equation}
\begin{remark}\label{rk:betagamma}
Both  $\beta_0$  \eqref{eq:beta_not} and $\tb$  \eqref{eq:beta_tild} are given by the same quadratic function $\beta(\gamma)$ \eqref{eq:betagamma}, in terms of $\gamma_0$ and $\tg$, respectively. 
\end{remark}
When plugging $\psi_1=\tp(z,\bar z):=(z\zb-1)^{-\tb}{u}^\tg$ into Eq. \eqref{eq:actionbis}, because $A(\tg)=0$ and $g'_0=0$,  many terms disappear, and the result is simply, 
\begin{equation}
\label{eq:action_tilde}
\frac{\Lambda \tp}{\tp}=\frac{z\zb-1}{{u}}\br{2t+\br{1+\frac{\kappa}{2}}\tg}+\frac{(z\zb-1)^2}{{u}^2}\br{1+\frac{\kappa}{2}}\tg.
\end{equation} 
As before, we distinguish two cases: $|z|\to 1$, but ${u}$ is bounded away from $0$;  ${u}\to 0$, and $|z|-1=O({u}^{1/2})$. 
In the first case, $\Lambda\tp/\tp$ is $O(|z|-1)$; in the second case, it is $O({u}^{-1})$.

\section{Mixing the two solutions}
For $t<t_1$, $g_0(u)$ changes sign at least once over the interval $(0,4)$, invaliding the proof of Section \ref{BS}. Recall that at the origin its sign alternates, as described in Eq. \eqref{eq:sgng0}.  The idea is thus to try and combine the two functions $\psi_0$ and $\psi_1$ into 
\begin{equation}\label{eq:psifin}
\psi:=\sigma \psi_0+\tp= \sigma g_0(u)u^{\gamma_0} (z\bar z-1)^{-\beta_0} + u^{\gamma_1}(z\bar z-1)^{-\beta_1},
\end{equation}
where $\sigma$ is as in Eq. \eqref{eq:sgng0}, so as to restore overall positivity for $\psi$. Then, in the action $\Lambda (\psi \ell_{\delta})$, the differential operator $\Lambda$ will act differently on $\psi_0$ and $\psi_1$, still maintaining the possibility to build sub- and super-solutions in this way. In this section, we are mostly interested in $t<t_1$, but some arguments are independent of that assumption, provided one assumes that, e.g., $t<0$, so that both $\psi_0$ and $\psi_1$ are defined. 
\begin{lemma}
\label{lemma:positive}
There is $r_0>1$ such that $\psi=\sigma \psi_0+\psi_1>0$ for all $z$ such that $1< |z|< r_0$.
\end{lemma}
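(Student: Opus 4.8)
The plan is to prove positivity of $\psi = \sigma\psi_0 + \psi_1$ near the unit circle by comparing the two trial functions $\psi_0$ and $\psi_1$ through their radial exponents in $(z\bar z - 1)$ and their powers of $u$, and showing that on the relevant region $\psi_1$ dominates wherever $\sigma\psi_0$ could be negative, while both together stay positive. First I would write out explicitly the two competing exponents. We have $\psi_0 = \sigma g_0(u)\, u^{\gamma_0}(z\bar z-1)^{-\beta_0}$ and $\psi_1 = u^{\gamma_1}(z\bar z-1)^{-\beta_1}$. Since $\gamma_1 = \gamma_+ > \gamma_0$ (this should be checked from the formulas \eqref{eq:beta_not} and \eqref{eq:beta_tild}, using that $\gamma_0$ is the relevant root of $C(\gamma)=0$ and $\gamma_1$ the larger root of $A(\gamma)=0$), and correspondingly $\beta_1 = \beta(\gamma_1)$ versus $\beta_0 = \beta(\gamma_0)$, the key is to understand the ratio $\psi_0/\psi_1 = \sigma g_0(u)\, u^{\gamma_0-\gamma_1}(z\bar z-1)^{\beta_1-\beta_0}$ as $|z|\to 1^+$.

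The main structural point is a separation of regions in $u = |1-z|^2$. For $u$ bounded away from $0$ (the generic part of the circle), $g_0(u)$ is continuous on $[0,4]$ with $g_0(4)>0$ by \eqref{eq:g04}, so $\sigma g_0(u)$ could be negative only on the subinterval where $g_0$ has already changed sign; but there $|\sigma\psi_0|$ is controlled by a fixed multiple of $u^{\gamma_0}(z\bar z-1)^{-\beta_0}$, and I would show $(z\bar z-1)^{-\beta_1}u^{\gamma_1}$ dominates this as $z\bar z-1\to 0$ precisely when the exponent of $(z\bar z-1)$ in the ratio has the favorable sign, i.e. $\beta_1 < \beta_0$ (equivalently $\beta_1 - \beta_0 < 0$, which must be verified from the explicit $\beta(\gamma)$). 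Thus uniformly in $u$ on any compact subinterval of $(0,4]$, $\psi_1 / |\sigma\psi_0| \to +\infty$, giving $\psi > 0$ for $|z|$ close to $1$ on that part. The delicate region is $u\to 0$ simultaneously with $|z|\to 1$, subject to the geometric constraint $|z|-1 \le u^{1/2}$, equivalently $z\bar z - 1 \le 2u^{1/2} + u$; here both factors degenerate and one must track the competition between $u^{\gamma_0-\gamma_1}$ and $(z\bar z-1)^{\beta_1-\beta_0}$ together with the behavior $g_0(u)\to C_0$ (which vanishes exactly at the transition points $\mathcal T_\kappa$). Near $u=0$ one should use $g_0(u) = C_0 + O(u^\alpha)$ with $\alpha$ as in \eqref{eq:alpha}, so that $\sigma g_0(0) = |C_0| \ge 0$, making $\sigma\psi_0$ nonnegative to leading order there; the worry is only the $O(u^\alpha)$ correction and the exceptional points where $C_0=0$.

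The hard part, and the place I expect the real obstacle, is precisely this corner $u\to 0,\ |z|\to 1$: one cannot treat $u$ and $z\bar z - 1$ as independent, and the sign of $\sigma\psi_0 + \psi_1$ depends on a genuinely two-variable comparison along the admissible boundary approach $z\bar z - 1 \asymp u^{1/2}$. I would reduce to this regime by substituting $z\bar z - 1 = s$ and estimating $\psi = u^{\gamma_0}s^{-\beta_0}\bigl(\sigma g_0(u) + u^{\gamma_1 - \gamma_0} s^{\beta_0 - \beta_1}\bigr)$, so everything hinges on the bracket. Leading order uses $\sigma g_0(0) = |C_0| \ge 0$ to handle the case $C_0 \ne 0$; when $C_0 = 0$ (on $\mathcal T_\kappa$) one falls back on the sign of the $O(u^\alpha)$ term together with the fact that Remark \ref{remark2} gives $\alpha = 1/2 - a - b$ there, and one checks the correction term has the right sign to keep the bracket positive once $s$ is small. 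The upshot is a uniform lower bound showing the bracket stays bounded away from $0$ (or is dominated by the positive $\psi_1$-contribution $u^{\gamma_1-\gamma_0}s^{\beta_0-\beta_1}$, which is positive since its exponents make it so as $s\to 0$) for $1 < |z| < r_0$ with $r_0$ chosen close enough to $1$, which is exactly the assertion of the lemma. I would organize the proof as: (i) fix signs of $\gamma_1-\gamma_0$ and $\beta_0-\beta_1$ from the explicit formulas; (ii) handle $u$ away from $0$ by domination of $\psi_1$; (iii) handle the corner $u\to 0$ via the expansion of $g_0$ and the constraint $z\bar z-1 \le 2u^{1/2}+u$, treating the generic ($C_0\ne 0$) and exceptional ($C_0=0$) subcases separately.
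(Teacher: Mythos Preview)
Your overall decomposition into ``$u$ bounded away from $0$'' versus ``$u$ small'' matches the paper's, but you have the decisive inequality backwards. For $\psi_1$, of order $(z\bar z-1)^{-\beta_1}$, to dominate $|\psi_0|$, of order at most $(z\bar z-1)^{-\beta_0}$, as $z\bar z-1 \to 0^+$ with $u$ bounded away from $0$, one needs $\beta_1 > \beta_0$, not $\beta_1 < \beta_0$: for $0<x<1$ one has $x^{-\beta_1} > x^{-\beta_0}$ iff $\beta_1 > \beta_0$. Proposition~\ref{pr:tb} gives precisely $\beta_1 > \beta_0$ for $t < t_1$, which is the regime where the lemma matters. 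With the sign you wrote, $\psi_1$ would be the \emph{smaller} term and the argument would collapse.

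You also make the corner $u\to 0$ much harder than it is. The whole point of the choice $\sigma := \sign g_0(0)$ is that, by continuity of $g_0$, there is a full neighborhood $[0,u_0]$ on which $\sigma g_0(u) \ge 0$; hence $\sigma\psi_0 \ge 0$ there, and since $\psi_1 > 0$ everywhere, $\psi > 0$ on $\{u \le u_0\}$ immediately---no two-variable competition, no $O(u^\alpha)$ corrections, no use of the constraint $z\bar z - 1 \le 2u^{1/2}+u$. The only region requiring work is $u \ge u_0$, where $\sigma g_0$ may be negative but $u$ is bounded above and below by positive constants; there $|\psi_0| \le c_0(z\bar z-1)^{-\beta_0}$ and $\psi_1 \ge c_1^{-1}(z\bar z-1)^{-\beta_1}$, and $\beta_1 > \beta_0$ forces $\psi_1 > |\psi_0|$ once $|z|$ is close enough to $1$. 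Finally, the exceptional points $t \in \mathcal T_\kappa$ (where $C_0 = 0$) are not dealt with inside this lemma at all; the paper handles them afterwards by convexity of the spectrum, so your attempt to treat that subcase here via Remark~\ref{remark2} is unnecessary.
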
 
\begin{proof}
First of all, note that both $\psi_0$ and $\tp$ are continuous in the complement of the unit disc, and that $\tp>0$ everywhere. For $t>t_1$ we have $g_0>0$, hence both terms in $\psi$ are positive.   For $t=t_1$, $g_0(u)>0$ for $u>0$, so that $\psi>0$. For $t < t_1$, this is no longer true, but $\sigma g_0(u)\geq 0$ in some neighborhood $u\in [0,u_0]$ of $z=1$, hence $\psi >0$  there. Outside of this neighborhood, ${u}$ is uniformly bounded away from $0$, and also bounded above by $(|z|+1)^2$; therefore, for any $r_0>1$, there exist positive constants $c_0$, $c_1$ such that for all $z$ such that $1<|z|<r_0$ and $u=|z-1|^2>u_0$,
$$
\begin{aligned}
0\le & |\psi_0| \le  {c_0}(z\zb-1)^{-\beta_0},\\
\frac{1}{c_1}(z\zb-1)^{-\tb} < & \tp  <c_1(z\zb-1)^{-\tb}.
\end{aligned}
$$
For $t <t_1$, we have that $\tb>\beta_0$ (See Prop. \ref{pr:tb} below and Fig. \ref{fig}), so that for $|z|$ sufficiently close to $1$,  we have $\tp>|\psi_0|$,  hence $\psi>0$. 
\end{proof}

\begin{lemma}
\label{lemma:sign}
For  $t<t_1$ and $t\notin \mathcal T_\kappa$, there is $r_0>1$ such that  $\Lambda (\psi \ell_{\delta})$ for $\psi$ \eqref{eq:psifin} has a constant sign  in the annulus $1< |z|< r_0$, which depends only on that of $\delta$.
\end{lemma}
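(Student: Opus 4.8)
The plan is to exploit the linearity of $\Lambda$ together with the logarithmic action formula \eqref{eq:action_log}, which holds verbatim for each of the two building blocks $\psi_0$ and $\tp$ separately, the correction term appearing there being independent of the particular trial function. Writing $L:=-\log(z\zb-1)$, which is positive for $|z|$ close to $1$, linearity yields
$$
\Lambda(\psi\ell_{\delta})=\ell_{\delta}\brb{\sigma\Lambda\psi_0+\Lambda\tp-\frac{2\delta z\zb}{uL}\,\psi}.
$$
Since $\ell_{\delta}>0$ and $\psi>0$ on $1<|z|<r_0$ by Lemma \ref{lemma:positive}, the sign of $\Lambda(\psi\ell_{\delta})$ is that of the bracket, and it equals $-\sign\delta$ as soon as the logarithmic term dominates the other two in modulus. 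Dividing by $\psi>0$, the statement reduces to the pointwise bound
$$
\left|\frac{\Lambda\psi_0}{\psi_0}\right|\frac{|\psi_0|}{\psi}+\left|\frac{\Lambda\tp}{\tp}\right|\frac{\tp}{\psi}<\frac{2|\delta|\,z\zb}{uL},\qquad 1<|z|<r_0,
$$
which I would verify on the generic region $u\ge u_0$ and on the tip region $u<u_0$ in turn, using throughout the geometric constraint $|z|-1\le u^{1/2}$, hence $z\zb-1=O(u^{1/2})$.

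The $\psi_0$ contribution is harmless and needs only \eqref{eq:action_not}: the choice $C(\gamma_0)=0$ was made exactly so that $\Lambda\psi_0/\psi_0$ carries no $u^{-1}$ term, being $O((z\zb-1)/u)+O\big((z\zb-1)^2u^{\alpha-2}\big)$ with $\alpha>0$ (valid since $t<t_3$). With $z\zb-1=O(u^{1/2})$ this is $O(u^{-1/2})+O(u^{\alpha-1})$ at the tip and $O(z\zb-1)$ generically; in either case it is $o(1/(uL))$, because $u^{1/2}L\to0$, $u^{\alpha}L\to0$ and $(z\zb-1)L\to0$. As $|\psi_0|/\psi\le1$, the first summand is thus below the right-hand side once $r_0$ is close enough to $1$. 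The same estimate disposes of the $\tp$ summand in every region where $\Lambda\tp/\tp=o(1/(uL))$: by \eqref{eq:action_tilde} and $z\zb-1=O(u^{1/2})$ this holds generically, and in the tip region whenever $u\gg (|z|-1)^2 L$, i.e.\ outside the narrow cone $\theta\lesssim(|z|-1)\sqrt{L}$ about $\theta=0$.

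The genuine difficulty is the $\tp$ summand inside that cone. There \eqref{eq:action_tilde} keeps the coefficient $(1+\tfrac{\kappa}{2})\tg>0$ in its $(z\zb-1)^2/u^2$ term, so $\Lambda\tp/\tp=O(1/u)$, which is emphatically \emph{not} $o(1/(uL))$: the logarithm cannot absorb it by comparison with $\tp$ alone. Instead I would use that $\psi$ is much larger than $\tp$ at the tip. Since there both $\sigma\psi_0=|\psi_0|\ge0$ and $\tp>0$ (this is where $t\notin\mathcal T_\kappa$ enters, ensuring $\sign\psi_0=\sigma$ near $z=1$), we have $\psi\ge|\psi_0|$, so it suffices to prove the pointwise dominance $\psi_0/\tp\gg L$. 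Writing $z\zb-1\asymp 2(|z|-1)$ and $u\asymp(|z|-1)^2+\theta^2\lesssim(|z|-1)^2L$ in the cone, this amounts to the exponent inequality $\tb-\beta_0+2\gamma_0-2\tg<0$, equivalently
$$
(4+\kappa)\tg+\kappa\gamma_0>0.
$$
I would establish the latter for all $t<0$ (hence for $t<t_1$) by monotonicity: at $t=0$ it equals $2(4+\kappa)/\kappa>0$, and its $t$-derivative $-(4+\kappa)/\sqrt{1-2\kappa t}+2\kappa/\sqrt{(4+\kappa)^2-8\kappa t}$ is negative, so the quantity only grows as $t$ decreases. Granting it, $\psi_0/\tp$ blows up like a negative power of $|z|-1$ times a power of $L$, and the polynomial factor beats every power of the logarithm; hence $\tp/\psi\ll 1/L$ and $|\Lambda\tp/\tp|\,(\tp/\psi)\ll 1/(uL)$. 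Collecting the two regions and the two summands yields the displayed estimate, and with it the constant sign $-\sign\delta$ of $\Lambda(\psi\ell_\delta)$.

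The main obstacle is precisely this survival of the $u^{-1}$ term in $\Lambda\tp$. For the pure Beliaev--Smirnov solution the cancellation $C(\gamma_0)=0$ let the logarithmic correction dominate by itself; here positivity forces us to carry $\tp$, whose action is genuinely of order $1/u$ along the radial approach to the tip, and the only remedy is the polynomial pointwise dominance of $\psi_0$ over $\tp$ there, which in turn rests on the sign inequality $(4+\kappa)\tg+\kappa\gamma_0>0$.
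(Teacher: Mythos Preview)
Your argument is correct and follows the paper's overall strategy---show that the logarithmic correction $-2\delta z\zb/(uL)$ dominates both $\Lambda\psi_0$ and $\Lambda\tp$---but the execution differs in two respects. The paper partitions the annulus into three power-law regions (Case~2: $(r-1)^{2-\varepsilon}<u<u_0$; Case~3: $r-1>u^{1/2+\varepsilon}$) and in the middle region simply observes that \emph{both} logarithmic contributions $\I$ and $\IV$ already carry the sign $-\sign\delta$, so no comparison is needed there; you instead use a single logarithmic cone $u\lesssim(r-1)^2L$ and outside it kill $\III$ via $(z\zb-1)^2L\ll u$. More substantively, in the innermost region the paper invokes Proposition~\ref{pr:t_tip}, i.e., $\beta_{\mathrm{tip}}=\beta_0-2\gamma_0-1>\tb$, which yields the sharper bound $\tp\lesssim\sigma\psi_0\,u^{1/2}$ and carries the thematic meaning ``the tip spectrum dominates $\beta_1$''. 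You instead prove and use the strictly weaker inequality $(4+\kappa)\tg+\kappa\gamma_0>0$, equivalent to $\beta_0-\tb>2(\gamma_0-\tg)$; since $\tg>0$ this is implied by the paper's inequality but suffices for your narrower cone. Your monotonicity verification of it is correct.

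Two points should be tightened. The blanket claim $|\psi_0|/\psi\le1$ is false wherever $\sigma g_0(u)<0$, i.e., for $u$ beyond the first zero of $g_0$; there one needs instead $|\psi_0|/\psi\lesssim(r-1)^{\tb-\beta_0}\to0$, which uses $\tb>\beta_0$ (Proposition~\ref{pr:tb}) and is in fact the only place the hypothesis $t<t_1$ enters your proof. Similarly, ``$u^{1/2}L\to0$'' and ``$u^{\alpha}L\to0$'' fail when $u$ stays bounded away from $0$ while $r\to1$; the uniform bound you actually need for the second term of \eqref{eq:action_not} is $(z\zb-1)^2u^{\alpha-1}L\to0$, which follows from $(z\zb-1)^2\lesssim u$ and $\alpha>0$ after maximising over $u\in[(r-1)^2,4]$.
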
 

\begin{proof}
As shown by Eq. \eqref{eq:action_log},  multiplying $\psi$ by a logarithmic factor $\ell_\delta$ results in an additional term in the action of the differential operator, whose sign depends on that of $\delta$ only. We shall show that near the boundary of the unit disc, this additional term is the main one, hence  $\Lambda (\psi \ell_{\delta})$ has constant sign there. Since $\psi$ is positive, this implies that $\psi \ell_{\delta}$ is a sub- or super-solution for $\delta >0$ or $\delta<0$, respectively. 

As was shown in \eqref{eq:action_log},  \eqref{eq:action_not},  \eqref{eq:log} and \eqref{eq:action_tilde}, $\Lambda (\psi \ell_{\delta})/\ell_{\delta}$ can be written (up to smaller order terms) as the sum of four terms
\begin{equation}\label{eq:4mousquetaires}
\begin{aligned}
\frac{\Lambda (\psi \ell_{\delta})}{\ell_{\delta}}=\sigma \psi_0\frac{-2\delta r^2}{{u}(-\log(r-1))}+\tp\frac{r-1}{{u}}+\tp\frac{(r-1)^2}{{u}^2}+\tp\frac{-2\delta r^2}{{u}(-\log(r-1))},
\end{aligned}
\end{equation}
where $r=|z|$.
We denote these terms by $\I$ -- $\IV$. 
Here, we omitted the positive constants in front of $\II, \III$, but terms $\I$ and $\IV$ are written in their {\em complete} form. For functions $A$ and $B$ of $r$, we shall use the short-hand notations, $A \gtrsim B$ for $A \geq c B$ with $c$ some positive constant, and $A \approx B$ when both $A\gtrsim B$ and $A \lesssim B$ hold. 
 
 Below, we consider three cases, and show that in each case one of the logarithmic terms (either $\I$ or $\IV$) is the leading one when $r\to 1^+$.

{\em Case 1:} ${u}$ is bounded away from zero. It is obvious that  $\IV \gtrsim \II, \III$ as $(r-1)\to 0$. For ${u}$ bounded away from zero, we have $\psi_0\approx (r-1)^{-\beta_0}$ and $\tp \approx (r-1)^{-\tb}$. Since $\tb>\beta_0$ for $t<t_1$ (Proposition \ref{pr:tb}), we have $\IV \gtrsim \I$. Also notice that $\tp$ is positive, hence for sufficiently small $r-1$, the sign of $\Lambda(\psi\ell_{\delta} )$ is opposite to that of $\delta$.  

{\em Case 2:} We assume that $(r-1)^{2-\varepsilon}<{u}<u_0$, where $\varepsilon>0$    and where, as above, $u_0>0$ is chosen such that $\sigma g_0({u})>0$ for $0<{u}<u_0$. Then,
$$
\begin{aligned}
\II&= \psi_1\frac{r-1}{{u}},\\
\III&\lesssim \psi_1\frac{(r-1)^2}{{u}}\frac{1}{(r-1)^{2-\varepsilon}}= \psi_1 \frac{(r-1)^\varepsilon}{{u}},\\
\IV &\approx \psi_1 \frac{-\delta}{{u} (-\log(r-1))},
\end{aligned}
$$
and obviously $\IV$ dominates $\II$ and $\III$. Since $\sigma \psi_0$ is positive for our choice of $u_0$, we have that  both main terms $\I$ and $\IV$ in \eqref{eq:4mousquetaires} have a sign opposite to that of $\delta$.

{\em Case 3:} We assume that  $r-1>{u}^{1/2+\varepsilon}$ for some $\varepsilon>0$ that will be determined later. We also recall that $r-1\leq {u}^{1/2}$.  In this case  $-\log(r-1)\approx -\log {u}$.
First, we notice that
$$
\begin{aligned}
\I&\approx \psi_0 \frac{-\delta}{{u} (-\log {u})},\\
\II&\lesssim \tp \frac{{u}^{1/2}}{{u}},\\
\III&\lesssim \tp \frac{1}{{u}}, \\
\IV&\approx \tp\frac{-\delta}{{u} (-\log {u})},
\end{aligned}
$$
so that both $\III$ and $\IV$ are dominated by $\III ':=\psi_1/u$. We would like to show that $\III ' \lesssim \I$, and  this requests comparing $\tp$ to $\psi_0$.  Proposition \ref{pr:t_tip} below precisely gives that   for $t<t_1$, one has
$\beta_{\mathrm{tip}}=\beta_0-2\gamma_0-1>\tb$, hence
$$
(r-1)^{-\tb}<(r-1)^{-\beta_0}(r-1)^{2\gamma_0+1},
$$
and
$$
\tp\approx (r-1)^{-\tb}{u}^{\tg}<(r-1)^{-\beta_0}{u}^{\gamma_0}{u}^{\tg-\gamma_0}(r-1)^{2\gamma_0+1}\approx \sigma \psi_0 {u}^{\tg-\gamma_0}(r-1)^{2\gamma_0+1}.
$$
Note that the last estimate requires that $g_0(0)\neq 0$, hence the condition $t\notin \mathcal T_\kappa$ in Lemma \ref{lemma:sign}. We expect ${u}^{\tg-\gamma_0}(r-1)^{2\gamma_0+1}$ to be bounded by some positive power of ${u}$. 
As was shown in Ref. \cite{BeSmSLE}, the definition of the threshold $t_2$ for the tip relevance is that $2\gamma_0+1<0$ for $t<t_2$, and we have here $t<t_1<t_2$ . Hence, for $r-1>{u}^{1/2+\varepsilon}$,
$$
{u}^{\tg-\gamma_0}(r-1)^{2\gamma_0+1}<{u}^{\tg-\gamma_0+(1/2+\varepsilon)(2\gamma_0+1)}={u}^{\tg+1/2 +\varepsilon(2\gamma_0+1)}.
$$
Since $\gamma_1>0$,  for sufficiently small $\varepsilon$, the latter is bounded by some positive power of ${u}$, e.g., by $u^{1/2}$. Hence for this $\varepsilon$, we have that
\begin{equation}\label{eq:psi0psi1}
\tp\lesssim \sigma \psi_0{u}^{1/2}.
\end{equation}
 Thus we see that $\III '\lesssim \I$, and that the sign of  $\Lambda(\psi\ell_{\delta})$   is given by that of $\I$ in Eq. \eqref{eq:4mousquetaires}, which is opposite to the sign of $\delta$.

Altogether, these three cases show that in some neighborhood of the unit circle, the sign of $\Lambda(\psi\ell_{\delta})$ is constant and opposite to that of $\delta$.
\end{proof}
Lemmas \ref{lemma:positive} and \ref{lemma:sign} together show that $\psi \ell_{\delta}$ is a sub- or super-solution, depending on the sign of $\delta$. Following Ref. \cite{BeSmSLE}, one obtains that for $F$ such that $\Lambda F=0$, there exist positive constants $c_2$ and $c_3$, such that in some annulus adherent to $\mathbb D$, $c_2 \psi \ell_{-\delta}<F<c_3 \psi \ell_{\delta}$, with $\delta >0$. We conclude that $F$ behaves like $\psi$ \eqref{eq:psifin}, up to arbitrary small logarithmic correction.\footnote{We believe that the logarithmic correction is not really there and that $F/\psi$ is bounded.} This completes the proof of Theorem \ref{thm:main} for $t<t_1$ and $t\notin \mathcal T_\kappa$. 

Lastly, when $t$ belongs to the discrete set $\mathcal T_\kappa$ \eqref{eq:tk}, because of \eqref{eq:g0tk}, $g_0(u)$ \eqref{eq:g0} vanishes too fast at $u=0$, and neither an upper bound like \eqref{eq:psi0psi1} holds, nor Lemma \ref{lemma:sign}. But having established Theorem \ref{thm:main} for $t\notin \mathcal T_\kappa$ suffices to prove it for all $t$, by simply invoking the {\it convexity} property of the integral means spectrum \cite{Makarov}.

\section{Phase transitions}\label{phasetransitions}
\subsection{Loci of various spectra} In this section, we prove that  Figure \ref{fig} gives an accurate description of the phase transitions between different parts of the integral means spectrum. Since all $\beta$'s  and $\gamma$'s are given by simple algebraic equations, all inequalities are in principle elementary, but could be a bit fiddly if not addressed in the right way. 

First, notice that $\gamma_0$ is  increasing and negative for $t<0$, whereas $\tg$ is decreasing and positive. Next, we shall need to study $\gamma_0+\tg$. Differentiating with respect to $t$ yields
$$
\frac{\partial}{\partial t}(\gamma_0+\tg)=
\frac{1}{\sqrt{(4+\kappa)^2/4-2t\kappa}}-\frac{1}{\sqrt{1-2t\kappa}}<0,
$$
so that $\gamma_0+\tg$ is decreasing; since $\gamma_0(0)=0$ and $\tg(0)=2/\kappa$, this gives that $\gamma_0+\tg>2/\kappa$ iff $t<0$.
\begin{proposition}
\label{pr:b}
Let $b=b(\gamma_0)$ be as in \eqref{eq:abc}, then $1/2-b>0$ if and only if $t>t_1$.
\end{proposition}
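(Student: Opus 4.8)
The plan is to express the quantity $1/2-b$ explicitly in terms of $t$ and $\kappa$, and then identify the threshold. Recall from Eq. \eqref{eq:abc} that $b=b(\gamma_0)=\gamma_0-\gamma_-$, where $\gamma_0=\frac{1}{2\kappa}\big(4+\kappa-\sqrt{(4+\kappa)^2-8\kappa t}\big)$ from Eq. \eqref{eq:beta_not} and $\gamma_-=(1-\sqrt{1-2\kappa t})/\kappa$. First I would substitute these to get a closed form
$$
\frac{1}{2}-b=\frac{1}{2}-\gamma_0+\gamma_-=\frac{1}{2}-\frac{1}{2\kappa}\big(4+\kappa-\sqrt{(4+\kappa)^2-8\kappa t}\big)+\frac{1}{\kappa}\big(1-\sqrt{1-2\kappa t}\big).
$$
The transition value $t_1=-\frac{1}{128}(4+\kappa)^2(8+\kappa)$ should be precisely the root of $1/2-b=0$, so the natural approach is to verify this directly and then argue monotonicity to settle the sign.

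The cleanest route is via monotonicity in $t$. I would differentiate $b(\gamma_0)=\gamma_0-\gamma_-$ with respect to $t$, using the explicit square-root expressions. Since $\gamma_0$ is increasing in $t$ (as stated just before the proposition) and $\gamma_-$ depends on $t$ through $-\sqrt{1-2\kappa t}/\kappa$, the derivative of $1/2-b$ is
$$
\frac{\partial}{\partial t}\Big(\frac{1}{2}-b\Big)=-\frac{\partial \gamma_0}{\partial t}+\frac{\partial \gamma_-}{\partial t}=-\frac{2}{\sqrt{(4+\kappa)^2-8\kappa t}}+\frac{1}{\sqrt{1-2\kappa t}}.
$$
The hard part will be determining the sign of this derivative cleanly: I would show it is strictly negative for all relevant $t$, which reduces to the inequality $2\sqrt{1-2\kappa t}<\sqrt{(4+\kappa)^2-8\kappa t}$, i.e. after squaring $4(1-2\kappa t)<(4+\kappa)^2-8\kappa t$, which simplifies to $4<(4+\kappa)^2$, true for all $\kappa>0$. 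Thus $1/2-b$ is \emph{strictly decreasing} in $t$.

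Given strict monotonicity, the sign of $1/2-b$ is governed entirely by its unique zero. The final step is to verify that $t_1$ is exactly that zero, i.e. that $b(\gamma_0(t_1))=1/2$; this is already asserted in Eq. \eqref{eq:g0tk} with $n=0$, and it also matches the defining property that $C_0=g_0(0)=0$ at $t=t_1$ via the pole of $\Gamma(1/2-b)$ in Eq. \eqref{eq:C0}. Since $1/2-b$ decreases through zero at $t_1$, it is positive for $t>t_1$ and negative for $t<t_1$, which is precisely the claim. The only mild subtlety to check is that the square roots are real and the expressions well-defined throughout the range considered (i.e. $t<t_3$, where the discriminants stay positive), so that the monotonicity argument applies without interruption; this follows since for $t\le 0$ both radicands are manifestly positive, and more generally the constraint $t<t_3$ keeps $(4+\kappa)^2-8\kappa t$ and $1-2\kappa t$ positive.
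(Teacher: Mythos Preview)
Your strategy---compute the derivative of $1/2-b$ in $t$, show monotonicity, then locate the unique zero at $t_1$---is exactly the paper's approach (the paper phrases it as ``$b=\gamma_0+\tg-2/\kappa$ is decreasing,'' using $\gamma_-=2/\kappa-\tg$). However, your execution contains two sign errors that happen to cancel.

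First, the sign of the derivative is reversed. You correctly compute
\[
\frac{\partial}{\partial t}\Big(\tfrac{1}{2}-b\Big)=-\frac{2}{\sqrt{(4+\kappa)^2-8\kappa t}}+\frac{1}{\sqrt{1-2\kappa t}},
\]
but this is \emph{negative} iff $\sqrt{(4+\kappa)^2-8\kappa t}<2\sqrt{1-2\kappa t}$, i.e.\ iff $(4+\kappa)^2<4$, which is false. The inequality you wrote, $2\sqrt{1-2\kappa t}<\sqrt{(4+\kappa)^2-8\kappa t}$, is indeed equivalent to $4<(4+\kappa)^2$ and is true, but it says the derivative is \emph{positive}. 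Hence $1/2-b$ is strictly \emph{increasing}, not decreasing. Second, your concluding sentence is internally inconsistent: a function that ``decreases through zero at $t_1$'' would be positive for $t<t_1$, not for $t>t_1$. The two mistakes offset and you land on the right statement, but the argument as written is not valid. Fixing either one (and then the other) gives a correct proof identical in spirit to the paper's. A minor side remark: your claim that $t<t_3$ keeps $1-2\kappa t>0$ is not quite right since $t_3>1/(2\kappa)$; but this is harmless here because the relevant range is a neighborhood of $t_1<0$.
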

\begin{proof}
Owing to \eqref{eq:abc}, $b=\gamma_0+\tg-2/\kappa$. Solving the equation $b=1/2$ yields $t=t_1$. Since $b$ is decreasing with $t$, we obtain that $1/2-b>0$ if and only if $t>t_1$.
\end{proof}
\begin{proposition}\label{pr:tkn}
The set of equations, $b=n+\frac{1}{2}, n\in \mathbb N$, is realized at the finite set of points $\mathcal T_\kappa:=\{t_{1-n},  n\in \mathcal J_\kappa\}$, where $\mathcal J_\kappa:=\{n\in \mathbb N, 0\leq n\leq \lfloor\kappa^{-1}\rfloor\}$ and where $t_{1-n}$ is given by Eq. \eqref{eq:tkn}. 
\end{proposition}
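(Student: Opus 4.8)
The plan is to convert the transcendental equation $b(\gamma_0(t))=n+\tfrac12$ into algebraic relations between the two radicals appearing in $\gamma_0$ and $\tg$, and to exploit a $t$-independent identity linking them that linearises the whole system. First I would record, exactly as in the proof of Proposition \ref{pr:b}, that $b=b(\gamma_0)=\gamma_0+\tg-2/\kappa$, using $\gamma_++\gamma_-=2/\kappa$ from \eqref{eq:abc}. Introducing the radicals
$$S_1:=\sqrt{(4+\kappa)^2-8\kappa t},\qquad S_2:=\sqrt{1-2\kappa t},$$
the formulas \eqref{eq:beta_not} and \eqref{eq:beta_tild} give $\gamma_0=(4+\kappa-S_1)/(2\kappa)$ and $\tg=(1+S_2)/\kappa$, so that $b=(2+\kappa-S_1+2S_2)/(2\kappa)$. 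Hence the equation $b=n+\tfrac12$ is equivalent to the single linear relation $S_1-2S_2=2(1-\kappa n)$.

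The key observation is the elementary identity $S_1^2-4S_2^2=(4+\kappa)^2-4=(2+\kappa)(6+\kappa)$, which does not depend on $t$. Writing $P:=S_1-2S_2$ and $Q:=S_1+2S_2$, this reads $PQ=(2+\kappa)(6+\kappa)$, while the previous relation fixes $P=2(1-\kappa n)$. For $\kappa n\neq 1$ this determines $Q$, hence $S_1=(P+Q)/2$ and $S_2=(Q-P)/4$ uniquely, and then $t:=(1-S_2^2)/(2\kappa)$. One checks this $t$ is a genuine solution: since $S_2>0$ and $S_1^2=4S_2^2+(2+\kappa)(6+\kappa)=(4+\kappa)^2-8\kappa t$, both radicals are recovered with the correct sign, and substituting back yields $b=n+\tfrac12$.

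For finiteness I would impose admissibility $S_1,S_2\ge 0$. With $P=2(1-\kappa n)$ and $Q=(2+\kappa)(6+\kappa)/P$, positivity of $S_1$ forces $P+Q>0$, which fails for $\kappa n\ge 1$ (there $P\le 0$ and $Q\le 0$, while $P=0$ contradicts $PQ>0$); conversely $\kappa n<1$ gives $P>0$, and since $P^2\le 4<(2+\kappa)(6+\kappa)$ one also gets $S_2=(Q-P)/4>0$. Thus solutions exist precisely for the integers $n$ with $n\kappa<1$, that is $n\in\mathcal J_\kappa$, which is finite; the case $n=0$ is admissible for every $\kappa>0$ and must return the known value $t_1$.

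Finally, setting $m:=1-\kappa n$ and $K:=(2+\kappa)(6+\kappa)$ gives $S_2=(K-4m^2)/(8m)$, whence $t=\big(64m^2-(K-4m^2)^2\big)/(128\kappa m^2)$. The main obstacle is the factorisation of this numerator into the product form of \eqref{eq:tkn}. I would split it as a difference of squares, $64m^2-(K-4m^2)^2=-(4m^2+8m-K)(4m^2-8m-K)$, and then recognise each quadratic factor as a product of linear terms in $\kappa$: namely $4m^2+8m-K=-\kappa(2n+1)(8+\kappa-2n\kappa)$, obtained from $\kappa^2(4n^2-1)-8\kappa(2n+1)$, and $4m^2-8m-K=-(4+\kappa-2n\kappa)(4+\kappa+2n\kappa)$, obtained by writing it as $4\kappa^2n^2-(\kappa+4)^2$. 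The resulting factor $\kappa$ cancels the $\kappa$ in the denominator, reproducing exactly formula \eqref{eq:tkn} for $t_{1-n}$, with the specialisation $n=0$ giving $t_1=-\tfrac1{128}(4+\kappa)^2(8+\kappa)$ as a consistency check.
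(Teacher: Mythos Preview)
Your argument is correct. Both you and the paper start from the identity $b=\gamma_0+\tg-2/\kappa$, but the routes then diverge. The paper's proof is essentially two lines: it invokes the monotonicity of $b$ in $t$ (established just before Proposition~\ref{pr:b} via $\partial_t(\gamma_0+\tg)<0$) and the endpoint values $b(0)=0$, $\lim_{t\to-\infty}b=\tfrac12+\tfrac1\kappa$ to read off the admissible range $n\in\mathcal J_\kappa$, and then simply asserts that ``solving the equation yields \eqref{eq:tkn}'' without displaying any algebra. Your approach instead bypasses monotonicity entirely: the $t$-independent identity $S_1^2-4S_2^2=(2+\kappa)(6+\kappa)$ turns the problem into a linear system in $S_1,S_2$, from which both the admissibility condition $\kappa n<1$ and the explicit closed form \eqref{eq:tkn} fall out by direct computation and a clean difference-of-squares factorisation. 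What you gain is a self-contained derivation that actually exhibits the factorised numerator; what the paper gains is brevity, at the cost of leaving the reader to do precisely the algebra you carried out. One minor remark: your admissibility criterion gives $n<\kappa^{-1}$ strictly, which agrees with $n\le\lfloor\kappa^{-1}\rfloor$ for non-integer $\kappa^{-1}$; when $\kappa^{-1}\in\mathbb N$ the boundary case $n=\kappa^{-1}$ makes the denominator of \eqref{eq:tkn} vanish, a degeneracy the paper's formulation also does not address.
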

\begin{proof} 
For $t\in (-\infty,0]$, we have that $b\in [0,\frac{1}{2}+\frac{1}{\kappa})$, which defines the possible range of values $n\in \mathcal J_\kappa$ where $b=n+\frac{1}{2}$. Solving the latter equation yields \eqref{eq:tkn}.
\end{proof}
\begin{proposition}
\label{pr:tb}
We have that
$
\tb(t)>\beta_0(t)
$
if and only if $t<t_1$. 
\end{proposition}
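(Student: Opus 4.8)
The plan is to exploit Remark \ref{rk:betagamma}, which records that $\tb=\beta(\tg)$ and $\beta_0=\beta(\gamma_0)$ are two values of the \emph{single} quadratic $\beta(\gamma)=\kappa\gamma^2-(\frac{\kappa}{2}+2)\gamma+t$ from \eqref{eq:betagamma}. This lets the difference factor cleanly, since for any quadratic the difference of two values is the difference of arguments times an affine factor:
$$
\tb-\beta_0=\beta(\tg)-\beta(\gamma_0)=(\tg-\gamma_0)\left[\kappa(\tg+\gamma_0)-\Big(\frac{\kappa}{2}+2\Big)\right].
$$
Thus the sign of $\tb-\beta_0$ is the product of the signs of the two factors on the right, and the whole proof reduces to controlling those two signs.

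First I would fix the sign of the bracketed factor. The key is the identity used in the proof of Proposition \ref{pr:b}, namely $b=b(\gamma_0)=\gamma_0+\tg-2/\kappa$, which gives $\kappa(\gamma_0+\tg)-2=\kappa b$ and hence
$$
\kappa(\tg+\gamma_0)-\frac{\kappa}{2}-2=\kappa b-\frac{\kappa}{2}=\kappa\Big(b-\frac{1}{2}\Big).
$$
Substituting this back yields the compact formula $\tb-\beta_0=\kappa(\tg-\gamma_0)\big(b-\tfrac12\big)$, which is the heart of the argument.

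Next I would settle the sign of $\tg-\gamma_0$. Since $t_1<0$, it suffices to argue on the range $t<0$, and there the monotonicity remarks already established above apply: $\gamma_0$ is negative while $\tg$ is positive, so $\tg-\gamma_0>0$ (both quantities being real because for $t<0$ both discriminants $1-2\kappa t$ and $(4+\kappa)^2-8\kappa t$ stay positive). Consequently $\tb-\beta_0$ has exactly the same sign as $b-\tfrac12$.

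Finally I would invoke Proposition \ref{pr:b}, which states $1/2-b>0$ iff $t>t_1$, equivalently $b-\tfrac12>0$ iff $t<t_1$. Combining with the previous step gives $\tb-\beta_0>0$ iff $t<t_1$, which is the claim. I do not anticipate any genuine obstacle: the only nonroutine point is recognizing the bracket as $\kappa(b-\tfrac12)$, and that is immediate from $b=\gamma_0+\tg-2/\kappa$; the remainder is sign bookkeeping already supplied by the monotonicity facts and Proposition \ref{pr:b}.
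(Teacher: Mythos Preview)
Your argument is correct and is essentially the paper's own: both use that $\beta_0$ and $\tb$ lie on the same parabola $\beta(\gamma)$, so their ordering is governed by whether $\gamma_0+\tg$ exceeds the vertex abscissa $(4+\kappa)/(2\kappa)$, which is exactly the condition $b>1/2$ from Proposition~\ref{pr:b}. The paper finishes with a single sign-change argument plus a value check at $t=0$ rather than your explicit factorization, and that check incidentally covers the range $t\ge 0$ that your phrase ``it suffices to argue on $t<0$'' leaves unaddressed (though $\tg-\gamma_0>0$ persists there as well, so your identity $\tb-\beta_0=\kappa(\tg-\gamma_0)(b-\tfrac12)$ still delivers the conclusion).
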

\begin{proof}
First of all, recall that all exponents $\beta$'s are given by the same quadratic function $\beta(\gamma)$ \eqref{eq:betagamma}. Since $\gamma_0\ne\tg$, $\beta(\tg)=\beta(\gamma_0)$ if and only if $\tg+\gamma_0=(4+\kappa)/2\kappa$. We know that this happens at $t=t_1$ only. Computing at $t=0$, we have $\beta_0(0)=0$ and $\tb(0)=-1$, hence $\tb(t)>\beta_0(t)$ if and only if $t<t_1$.
\end{proof}
Note that Propositions \ref{pr:b} and \ref{pr:tb} imply that the point where the construction in Ref. \cite{BeSmSLE} breaks down and  the point where $\tb$ exceeds $\beta_0$ coincide. 
\begin{proposition}
\label{pr:t_tip}
We have that 
$
\tb<\beta_{\mathrm{tip}}=\beta_0-2\gamma_0-1 
$ if and only if $t<0$.
\end{proposition}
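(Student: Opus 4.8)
The plan is to reduce the claimed inequality to a statement about two square roots and then settle it by a single squaring. Using the explicit forms from Theorem \ref{thm:main} and Eq. \eqref{eq:beta_tild}, namely
\[
\beta_{\mathrm{tip}}(t)=-t-1+\tfrac14\left(4+\kappa-\sqrt{(4+\kappa)^2-8\kappa t}\right),
\qquad
\tb(t)=-t-\tfrac12\left(1+\sqrt{1-2\kappa t}\right),
\]
I would subtract them. The $-t$ terms cancel, and writing $P:=\sqrt{(4+\kappa)^2-8\kappa t}$ and $Q:=\sqrt{1-2\kappa t}$ (both real and nonnegative on the relevant range $t\le 1/(2\kappa)$ where $\tb$ is defined), one finds $\beta_{\mathrm{tip}}-\tb=\tfrac{\kappa}{4}+\tfrac12-\tfrac{P}{4}+\tfrac{Q}{2}$. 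Hence $\tb<\beta_{\mathrm{tip}}$ is equivalent, after multiplying by $4$, to
\[
P-2Q<\kappa+2.
\]

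The inequality reads $P<(\kappa+2)+2Q$, and since both sides are strictly positive I would square it, which is an honest equivalence (no sign ambiguity, as $P\ge 0$ and $(\kappa+2)+2Q>0$). The key point is that the $t$-dependence enters $P^2=(4+\kappa)^2-8\kappa t$ and $4Q^2=4-8\kappa t$ through the same linear term $-8\kappa t$, so that term cancels upon squaring. Using $(4+\kappa)^2-(\kappa+2)^2-4=4(\kappa+2)$, the inequality collapses to $4(\kappa+2)<4(\kappa+2)\,Q$, i.e. $Q>1$. Finally $Q=\sqrt{1-2\kappa t}>1$ holds iff $-2\kappa t>0$, i.e. iff $t<0$, which is exactly the assertion.

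I expect no real obstacle beyond keeping the algebra honest; a cleaner, essentially equivalent route avoids even the squaring concern. Set $g(t):=P-2Q$, note $g(0)=(4+\kappa)-2=\kappa+2$, and show $g$ is strictly increasing. Indeed $g'(t)=2\kappa\left(Q^{-1}-2P^{-1}\right)$, whose sign is that of $P-2Q$; and $P>2Q$ holds identically because $P^2-4Q^2=(4+\kappa)^2-4>0$ for every $\kappa>0$. Thus $g$ is strictly increasing with $g(0)=\kappa+2$, so $g(t)<\kappa+2$ precisely for $t<0$. Either way the result follows.
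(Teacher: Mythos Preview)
Your proof is correct. Both the squaring argument and the monotonicity argument are sound; in particular, the squaring step is legitimate because $P\ge 0$ and $(\kappa+2)+2Q>0$, and the cancellation of the $-8\kappa t$ terms is the heart of the matter.

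The paper's proof is different in packaging, though ultimately the same algebra underlies it. Rather than working with the explicit square-root formulas, the paper uses the identities $\beta_0=-t+\tfrac{4+\kappa}{2}\gamma_0$ and $\tb=-t-\tfrac{\kappa}{2}\tg$ to compute
\[
\beta_{\mathrm{tip}}-\tb=\frac{\kappa}{2}(\gamma_0+\tg)-1=\frac{\kappa}{2}\,b(\gamma_0),
\]
and then invokes the fact, established earlier in the section, that $b(\gamma_0)=\gamma_0+\tg-2/\kappa>0$ iff $t<0$ (via monotonicity of $\gamma_0+\tg$). The payoff of the paper's route is structural: it ties the inequality to the same quantity $b$ that governs Proposition~\ref{pr:b} and the sign of $g_0(0)$, so the three thresholds are visibly linked. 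Your route is self-contained and does not require the preliminary analysis of $\gamma_0+\tg$; it would work even without the surrounding propositions. Both are short and elementary.
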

\begin{proof}
 From Eqs. \eqref{eq:beta_not} and \eqref{eq:beta_tild} we have, 
$$
\beta_0  =-t+\frac{4+\kappa}{2}\gamma_0,\,\,\,
\tb  = -t-\frac{\kappa}{2}\tg,
$$
so that
$$
\beta_{\mathrm{tip}}-\tb=\frac{\kappa}{2}(\gamma_0+\tg)-1=\frac{\kappa}{2}b(\gamma_0),
$$
which is positive  iff $t<0$.
\end{proof}
\subsection{Second tip and derivative exponents}\label{Sectip}
  Of particular interest here is the \emph{packing spectrum}
associated with the $\beta_1$ spectrum of whole-plane SLE$_\kappa$ (see Ref. \cite{Makarov} 
for a detailed discussion of the different spectra of the
harmonic measure and their relations),
\begin{eqnarray}\label{eq:s1}s(t):=\beta_1(t)-t+1=-2t+\frac{1}{2}-\frac{1}{2}\sqrt{1-2\kappa t}.
\end{eqnarray}
In the domain $t\leq t_1<0$, $s(t)$ is decreasing;  its inverse is
 \begin{equation}\label{eq:nus} t=-\nu(s),\,\,\,\nu(s):=\frac{s}{2} +\frac{1}{16}\left(\kappa-4+\sqrt{(4-\kappa)^2+16 \kappa s}\right),
\end{equation}
where, as remarked in Ref. \cite[Section 4.4]{DNNZ},  $\nu(s)$ coincides with the non-standard multifractal tip exponents as obtained  in Ref. \cite[Eq. (12.19)]{MR2112128},  or with the so-called \emph{derivative exponents} as obtained in Ref. \cite[Eq. (3.1)]{MR2002m:60159b} for standard (interior or exterior) radial SLE. For $\hat g_\tau$ such a radial $\mathrm{SLE}_{\kappa}$ of hull $K_\tau$, the exponent $\nu$ describes the exponential decay $e^{-\nu\,\tau}$ in time $\tau>0$, of the moment of the boundary derivative modulus, $\mathbb E \left[|\hat g'_\tau(z)|^s\right]$, for $z \in \partial \mathbb D\setminus K_\tau$. It also governs the same exponential decay of the moment of order $s$, $\mathbb E \left[L_t^s\right]$, of the harmonic measure $L_t$ of $\partial \mathbb D\setminus K_\tau$ in $\mathbb D\setminus K_\tau$, as seen from the origin in the interior case, or from infinity in the exterior case. 

In the case of the interior whole-plane SLE, of map $f_0$ from $\mathbb D$ to a slit domain \cite{DNNZ}, one has  $f_0=\lim_{\tau \to +\infty} e^\tau \hat g_{-\tau}$, which is in law the same  as $\lim_{\tau \to +\infty} e^\tau \hat g_{\tau}^{-1}$, with $\hat g_\tau$ standard interior radial SLE;  in this limit, the unit circle is pushed back to infinity as $e^{\tau}\partial \mathbb D$.  Ref. \cite[Section 4.4, Figure 8]{DNNZ} then provides a heuristic explanation of the inverse relation between \eqref{eq:s1} and \eqref{eq:nus} as due, in the integral means $\int_{r\partial\mathbb D} |f_0'(z)|^t |dz|$ where $r\to 1^-$,  to the local boundary contribution of the image under $\hat g_{\tau}$ of  $\partial \mathbb D\setminus K_\tau$, i.e., in the limit $\tau\to +\infty$,  of the pre-image under $f_0$ of the point at infinity. 


The two whole-plane maps, interior $f_0$ and exterior $g_0^{-1}$, are naturally conjugate under the inversion map $z\mapsto 1/z$, as are the interior and exterior versions of radial SLE $\hat g_\tau$; in the exterior case, the unit circle shrinks as $e^{-\tau}\partial \mathbb D$ to  a vanishingly small circle around the origin, to which the whole-plane SLE curve is anchored via this  `second tip'.  This strongly suggests that for bounded whole-plane SLE, the $\beta_1$-spectrum is due to the presence of the second tip, image of the point at infinity in the unbounded case.

\section*{Acknowledgements}\thanks{The authors acknowledge partial funding in 2012 by Oxford Platform Grant BKRSZE0. They wish to thank the Simons Center for Geometry and Physics at Stony Brook University for its hospitality and support during the  Spring 2013 program ``Conformal Geometry''. They also wish to thank the Isaac Newton Institute (INI) for
Mathematical Sciences at Cambridge University for its hospitality and support during the 2015 program ``Random Geometry'', supported by EPSRC Grant Number EP/K032208/1. B.D. also gratefully acknowledges the support of a Simons Foundation fellowship at INI during the Random Geometry program. D.B. was partially funded by Engineering \& Physical Sciences
Research Council (EPSRC) Fellowship ref. EP/M002896/1.  B.D. acknowledges financial support from the French Agence Nationale
de la Recherche via the grant ANR-14-CE25-0014 ``GRAAL''; he is also partially funded by the CNRS Projet international de coop\'eration scientifique (PICS) ``Conformal Liouville Quantum Gravity'' n$^{\mathrm o}$PICS06769. B.D. and M.Z. are partially funded by the CNRS-{\sc insmi} Groupement de Recherche (GDR 3475) ``Analyse Multifractale''.}

\bibliography{sle}
\bibliographystyle{abbrv}

\end{document}